\documentclass[journal,draftclsnofoot,onecolumn,12pt,twoside]{IEEEtran}

\usepackage{makecell}
\usepackage[utf8]{inputenc} 
\usepackage[T1]{fontenc}
\usepackage{url}
\usepackage{ifthen}

\usepackage{times}
\usepackage[cmex10]{amsmath}
\usepackage{amssymb}
\usepackage{amsthm}
\usepackage{algorithm}
\usepackage{diagbox}
\usepackage{graphicx}
\usepackage{multirow}
\usepackage{tabularx}
\usepackage[bookmarks=false,colorlinks=false,pdfborder={0 0 0}]{hyperref}
\usepackage{cite}
\usepackage{bm}
\usepackage{arydshln}
\usepackage{mathtools}
\usepackage{microtype}
\usepackage{algorithmic}

\hypersetup{hidelinks}

\newtheorem{theorem}{Theorem}[section]
\newtheorem{lemma}[theorem]{Lemma}

\newtheorem{corollary}[theorem]{Corollary}

\long\def\symbolfootnote[#1]#2{\begingroup
\def\thefootnote{\fnsymbol{footnote}}\footnote[#1]{#2}\endgroup}
\renewcommand{\paragraph}[1]{{\bf #1}}
\title{Optimal Exponent of the Single-Error Correction Threshold with Fixed Redundancy for Analog Error-Correcting Codes}

\author{Zhengyi Jiang, Wenhao Liu, Zhongyi Huang, Hanxu Hou}

\begin{document}
\let\emph\textit
\maketitle
\symbolfootnote[0]{
Z. Jiang, W. Liu and Z. Huang are with the Department of Mathematics Sciences, Tsinghua University, Beijing, China~(E-mail: jzy10492761962@163.com,
wh-liu24@mails.tsinghua.edu.cn, zhongyih@tsinghua.edu.cn). 
H. Hou is with the Shenzhen University of Advanced Technology~(E-mail: houhanxu@163.com). \emph{(Corresponding author: Hanxu Hou.)}

This work was partially supported by the National Key R\&D Program of
China (No. 2025YFA1017200), the National Natural Science Foundation of China (No. 62371411, 61901115, 12025104).
}

\begin{abstract}
Analog error-correcting codes (Analog ECCs), introduced by Roth \cite{AECC2020}, address errors in vector-matrix multiplication arising from analog noise and sparse outliers in in-memory computing. 
A fundamental open problem concerns the lower bound on the single-error correction threshold $\Gamma_2(\mathcal C)$ for real $[n,k]$ linear codes with fixed redundancy $r=n-k\geq 2$.
Li \emph{et al.} \cite{AECC202606} recently established that for redundancy $r=2$, every real $[n,n-2]$ linear code $\mathcal{C}$ satisfies $\Gamma_2(\mathcal C)\geq \csc^2(\frac{\pi}{2n})$, resolving an open problem in \cite{AECC2020}, and showed that,
for every fixed $r \geq 2$, there exists a class of $[n,k]$ linear code $\mathcal{C}$ over $\mathbb{R}$ such that $\Gamma_2(\mathcal{C}) \leq  O(n^{1+\frac{1}{r-1}})$.

This paper proves the matching converse in \cite{AECC202606}. For every $[n,k]$ linear code $\mathcal{C}\subseteq \mathbb R^n$ with fixed redundancy $2\leq r<n$, we show that
\[
\Gamma_2(\mathcal C)\ge \frac{a_r}{\sqrt{r}\,\beta_{r-1}\,2^{\frac{1}{r-1}}}\cdot  n^{1+\frac{1}{r-1}},
\]
where $a_r =\frac{\Gamma(\frac{r}{2})}{\sqrt{\pi}\,\Gamma(\frac{r+1}{2})}$ and $\beta_d = \left(\frac{d\pi^{d-1}|\mathbb S^d|}{|\mathbb S^{d-1}|}\right)^{\frac{1}{d}}$ for positive integer $d$. 
Here $\mathbb S^d$ denotes the unit sphere in $\mathbb R^{d+1}$, $|\mathbb S^d|$ its surface area, and $\Gamma(\cdot)$ the Gamma function.
In particular, we further show that $\Gamma_2(\mathcal C)\geq \frac{1}{4\pi\sqrt{3}r}\cdot n^{1+\frac{1}{r-1}}$.
Together with the upper bound in \cite{AECC202606}, this confirms that the exponent $n^{1+\frac{1}{r-1}}$ is optimal, completing the asymptotic characterization of the single-error correction threshold for Analog ECCs.
\end{abstract}

\begin{IEEEkeywords}
Analog error-correcting codes, Single-error correction, Lower bound
\end{IEEEkeywords}

\section{Introduction}
Real-valued vector-matrix multiplication underpins countless signal processing and machine learning operations \cite{Goodfellow-et-al-2016,hu2016dot,104196,sebastian2020memory,zhang2023edge,50305}, yet in practice its outputs are invariably corrupted by analog circuit noise, quantization errors, and channel disturbances.
To address this challenge, Roth introduced the framework of \emph{analog error-correcting codes} (Analog ECCs) ~\cite{AECC2019,AECC2020}. 
The Analog ECC model targets the corruption of real-valued vector outputs caused by small perturbations and large errors (outliers), with the objective of designing linear codes over $\mathbb{R}^n$ that can distinguish between these two error classes.

The rigorous mathematical model can be described as follows. Consider the vector-matrix multiplication $\boldsymbol{c} = \boldsymbol{u}A$, where $\boldsymbol{u} \in \mathbb{R}^{\ell}$ and $A \in \mathbb{R}^{\ell \times n}$. The computed result $\boldsymbol{y}$ contains small disturbances (denoted by $\boldsymbol{\varepsilon}=(\varepsilon_i)_{i\in[0,n-1]}$) and outlying errors (denoted by $\boldsymbol{e}=(e_i)_{i\in[0,n-1]}$), namely,
\begin{align}\nonumber
    \boldsymbol{y} = \boldsymbol{c} + \boldsymbol{\varepsilon} + \boldsymbol{e}.
\end{align}
It is usually assumed that the noise vector $\boldsymbol{\varepsilon}$ obeys $\lVert \boldsymbol{\varepsilon} \rVert_{\infty} \leq \delta$ for some prescribed small constant $\delta>0$. Large errors, defined by $|e_j| > \Delta$ for some $\Delta>\delta$ and $j\in[0,n-1]$, reside in the vector $\boldsymbol{e}$. In the Analog ECC model, the goal is todetect and locate the positions of large errors.
There has been extensive research in the literature on the theory of Analog ECCs \cite{AECC2024,AECC20242,2025Analog,zhu2026new,AECC202605,AECC202606}, the theoretical characterization and analysis of their error-correcting capability (or thresholds), as well as explicit code constructions and decoding algorithms.

The error-correcting capability of Analog ECCs is characterized by the threshold ($\frac{\Delta}{\delta}$). 
This paper focuses on the scenario of single-error correction. 
Roth in \cite{AECC2020} constructed a single-error correction MDS code (i.e., $n-k=2$), denoted as $\mathcal{C}_0$ in this paper, whose error-correction threshold is $$\Gamma_2(\mathcal{C}_0)=\csc^2(\frac{\pi}{2n}),$$ and posed the open problem \cite[Problem 4]{AECC2020}: ``Does the construction $\mathcal{C}_0$ have the smallest
possible value of $\Gamma_2(\cdot)$, among all $[n, n-2]$ linear codes
over $\mathbb{R}$?''.
Recently, Li \emph{et al.} \cite{AECC202606} resolved this problem by proving $$\Gamma_2(\mathcal{C})\geq \csc^2(\frac{\pi}{2n})$$ for any $[n, n-2]$ linear code $\mathcal{C}$. Furthermore, \cite{AECC202606} provided a specific $[n,n-r]$ linear code construction (denoted as $\mathcal{C}_1$) for $r\geq 2$ with $$\Gamma_2(\mathcal{C}_1) \leq 4n\left\lceil 
  \frac{n^{\frac{1}{r-1}} \Gamma\!\left(\frac{r+1}{2}\right)^{\frac{1}{r-1}}}{\sqrt{\pi}} 
  + \sqrt{\frac{r-1}{2}} 
\right\rceil= O(n^{1+\frac{1}{r-1}}).$$
They also derived an existential upper bound, showing that for sufficiently large $n$, there exists a real $[n,k]$ linear code $\mathcal{C}$ such that
\begin{align*}
    \Gamma_2(\mathcal{C})\leq \frac{\pi}{c_{r-1}}n^{1+\frac{1}{r-1}}, \ \text{where} \ c_d=\left(\frac{d|\mathbb S^d|}{|\mathbb S^{d-1}|}\right)^{\frac{1}{d}}.
\end{align*}
One natural question that arises is whether the leading term $n^{1+\frac{1}{r-1}}$ of the threshold $\Gamma_2(\cdot)$ is optimal for fixed $r\geq 3$.

This paper resolves the aforementioned gap by establishing the first universal lower bound on $\Gamma_2(\cdot)$ for fixed redundancy $r\geq 2$. 
\begin{enumerate}
    \item \textbf{Exact dual characterization of $\Gamma_2(\mathcal{C})$.} 
    We derive a new ordered-pair dual expression for the single-error correction threshold $\Gamma_2(\mathcal{C})$ of any real $[n,k]$ linear code with $n-k\geq 2$. 
    This expression (see Theorem~\ref{prop:dual}) reformulates the threshold as a constrained $\ell_1$-norm (which is defined in terms of the parity matrix) minimization problem over an affine set, providing a precise analytical handle for $\Gamma_2(\mathcal{C})$ that has not appeared in prior studies.

    \item \textbf{Normalization of the noise zonotope and parity matrix.} 
    We invoke the symmetric John's ellipsoid theorem to linearly normalize the noise zonotope $S_H$ (Eq. \eqref{EQ2}), ensuring that it is sandwiched between the Euclidean unit ball and its $\sqrt{r}$-multiple. 
    Based on this normalization, we introduce a averaging argument that yields a globally bound on the sum of the Euclidean norms of all columns of the parity matrix (see Eqs. \eqref{eqq2} to \eqref{eqq3}). 
    These novel techniques facilitate the analysis of the lower bound problem for $\Gamma_2(\mathcal{C})$ in the general setting of arbitrary redundancy $r\geq 2$.

    \item \textbf{Explicit uniform lower bound for all fixed redundancy.} 
    Building on the two preceding conclusions, we establish the first explicit and uniform lower bound 
    \[
    \Gamma_2(\mathcal{C}) \ge \frac{1}{4\pi\sqrt{3r}}\, n^{1+\frac{1}{r-1}}, \ 2\le r<n,
    \]
    for every real $[n,n-r]$ linear code $\mathcal{C}$. 
\end{enumerate}
Together with the upper bound obtained from the construction in \cite{AECC202606}, this implies that the leading term $n^{1+\frac{1}{r-1}}$ is optimal for fixed $r\geq 2$. 
Our results thus complete the theoretical framework for single-error correction threshold of Analog ECCs.

Although \cite{AECC2020} provided a qualitative characterization of the geometric features for $\Gamma_2(\mathcal{C})$ to some extent, it did not further offer a precise quantitative characterization; and Li \emph{et al.}~\cite{AECC202606} focused on a detailed trigonometric analysis for the special case $r=2$, however, this trigonometric analysis is only valid for the case $r=2$ and cannot be extended to higher-dimensional settings where $r\geq 3$.
Jiang \emph{et al.} in \cite{AECC20242} established an explicit optimization model for the value of $h_m(\mathcal{C})$ for general $m \ge 1$, and showed that solving it is equivalent to solving $n(n-1)\binom{n-1}{m-1}2^m$ linear programming (LP) subproblems with $k$ real-valued variables. 
Roth \emph{et al.} in \cite{roth2026height} recently further optimized the algorithms in \cite{AECC20242} for computing $h_m(\mathcal{C})$. 
However, these optimization problems are only applicable to the approximate evaluation of $h_m(\mathcal{C})$ for a given specific linear code $\mathcal{C}$ (and since the computational complexity of solving the sub-LP problems grows exponentially, it is typically only applicable to small code lengths $n$), and it is difficult to use them to obtain a deeper characterization of the bounds in the general case.
Despite prior efforts, the problem of providing an exact theoretical characterization of $\Gamma_2(\mathcal{C})$ for every fixed $r\ge 3$ remains unresolved.


The remainder of this paper is organized as follows. 
Section~\ref{sec:prelim} reviews prior work and necessary mathematical preliminaries.
Section~\ref{sec:duality} establishes an exact ordered-pair dual expression for $\Gamma_2(\mathcal{C})$. 
Section~\ref{sec:geometry} analyzes several key geometric properties and the corresponding normalization results.
Section~\ref{sec:lower} proves the main lower bound and derives explicit constants.
Section~\ref{sec:con} concludes the paper.

\section{Preliminaries}\label{sec:prelim}

\subsection{Analog Error-Correcting Codes}
In this section, we review the relevant definitions of Analog ECCs and the key results in related works.

\textbf{$(\tau,\sigma)$ decoder for ($\mathcal{C}, \Delta:\delta$).} For an $[n=k+r, k]$ linear code $\mathcal{C}$ over $\mathbb{R}$, a decoder for $\mathcal{C}$ is a function $\mathcal{D}: \mathbb{R}^n \to 2^{[0,n-1]} \cup \{\text{"e"}\}$ that returns a set of locations of outlying errors or an indication "e" that errors have been detected.
For $\Delta\geq 0$, and any vector $\boldsymbol{e}=(e_i)_{i\in[0,n-1]}\in\mathbb{R}^n$, 
$\operatorname{Supp}_{\Delta}(\boldsymbol{e}):=\{i\in[0,n-1]||e_i|>\Delta\}$.
Specifically, a $(\tau,\sigma)$ decoder for ($\mathcal{C}, \Delta:\delta$) is defined as \cite{AECC2020}: Given $\delta, \Delta \in \mathbb{R}^+$ and nonnegative integers $\tau$ and $\sigma$, a decoder $\mathcal{D}$ that can correct $\tau$ errors and detect $\sigma$ additional errors with the threshold pair $(\delta, \Delta)$  is equivalent to satisfying the following two conditions ((D1) and (D2)) for every $\boldsymbol{y}=\boldsymbol{c}+\boldsymbol{\varepsilon}+\boldsymbol{e}$, where $\boldsymbol{c} \in \mathcal{C}$, $\boldsymbol{\varepsilon} \in \mathcal{Q}(n, \delta)$, and $\boldsymbol{e} \in \mathcal{B}(n, \tau + \sigma)$. Here, set $\mathcal{Q}(n,\delta):=\{\boldsymbol{\varepsilon}=(\varepsilon_i)_{i\in[0,n-1]}||\varepsilon_i|\leq \delta, \ \forall i\in[0,n-1]\}$, and $\mathcal{B}(n,m)$ represents the set of vectors in $\mathbb{R}^{n}$ whose Hamming weight does not exceed $m$.

(D1) If $\boldsymbol{e} \in \mathcal{B}(n, \tau)$ then "e" $\neq \mathcal{D}(\boldsymbol{y})\subseteq \operatorname{Supp}_{0}(\boldsymbol{e})$;

(D2) If $\mathcal{D}(\boldsymbol{y}) \neq \text{``e''}$ then $\operatorname{Supp}_{\Delta}(\boldsymbol{e}) \subseteq \mathcal{D}(\boldsymbol{y})$.

\textbf{Threshold Results.} The error-correction threshold directly reflects the error-correcting capability, and its lower bound serves as an important criterion for the existence of a decoder.
The threshold is characterized by the \emph{height}.
For a nonzero $\boldsymbol{c}=(c_i)_{i\in[0,n-1]}\in\mathbb R^n$, write its ordered coordinate magnitudes as ($\{|c_i|\}_{i=0}^{n-1}=\{|c|_{(j)}\}_{j=0}^{n-1}$)
\[
|c|_{(0)}\geq |c|_{(1)}\ge |c|_{(2)}\ge \cdots \ge |c|_{(n-1)}.
\]
Define the $m$-height
\[
h_m(\boldsymbol{c})=\frac{|c|_{(0)}}{|c|_{(m)}},
\]
with value $+\infty$ when $|c|_{(m)}=0$. For an $[n,k]$ linear code $\mathcal{C}$ over $\mathbb{R}$, define its $m$-height as 
\[
h_2(\mathcal C)=\sup_{\boldsymbol{c}\in\mathcal C\setminus\{\boldsymbol{0}\}} h_2(\boldsymbol{c}).
\]
The following theorem gives the necessary and sufficient
conditions for the existence of the $(\tau,\sigma)$ decoder for ($\mathcal{C}, \Delta:\delta$).
\begin{theorem}\label{TH.01}\cite[Theorem 1]{AECC2020}There exists a $(\tau,\sigma)$ decoder for ($\mathcal{C}, \Delta:\delta)$ if and only if\begin{align*}    \frac{\Delta}{\delta}\geq 2h_{2\tau+\sigma}(\mathcal{C})+2.\end{align*}\end{theorem}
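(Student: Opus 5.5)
The plan is to prove the two directions of Theorem~\ref{TH.01} separately. For sufficiency I will build an explicit decoder from a syndrome-based feasibility test; for necessity I will use the height to produce two received words that no decoder can handle. Write $m=2\tau+\sigma$ and $h=h_m(\mathcal C)$. The key reformulation is that $\boldsymbol{y}$ is consistent with an error pattern supported on a set $J$ exactly when $\boldsymbol{y}-\boldsymbol{e}'-\boldsymbol{\varepsilon}'\in\mathcal C$ for some $\boldsymbol{e}'$ supported on $J$ and some $\boldsymbol{\varepsilon}'\in\mathcal Q(n,\delta)$.

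\textbf{Sufficiency.} Assume $\Delta/\delta\ge 2h+2$. Given $\boldsymbol{y}$, let $\mathcal D$ search for a set $J$ with $|J|\le\tau$ and a decomposition $\boldsymbol{y}=\boldsymbol{c}'+\boldsymbol{\varepsilon}'+\boldsymbol{e}'$ with $\boldsymbol{c}'\in\mathcal C$, $\|\boldsymbol{\varepsilon}'\|_\infty\le\delta$ and $\operatorname{Supp}_0(\boldsymbol{e}')\subseteq J$. Among the feasible $J$, it outputs the intersection of the feasible supports, or more concretely a canonical minimal one. If no $J$ is feasible, it outputs ``e''.
\begin{itemize}
\item[] \emph{(D1).} If $\boldsymbol{e}\in\mathcal B(n,\tau)$, the true support is feasible, so the output is not ``e''. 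Minimality then has to be shown to force the output into $\operatorname{Supp}_0(\boldsymbol{e})$.
\item[] \emph{(D2).} Suppose some feasible $J$ misses an index $i$ with $|e_i|>\Delta$. Subtracting the two decompositions gives a codeword $\boldsymbol{c}-\boldsymbol{c}'=\boldsymbol{e}'-\boldsymbol{e}+\boldsymbol{\varepsilon}'-\boldsymbol{\varepsilon}$. Outside the at most $m$ positions in $\operatorname{Supp}(\boldsymbol{e})\cup J$, its entries are bounded by $2\delta$. At position $i$ its magnitude exceeds $\Delta-2\delta$.
\end{itemize}
This codeword therefore has $|c|_{(0)}>\Delta-2\delta\ge 2h\delta$ and $|c|_{(m)}\le 2\delta$, so $h_m>h$, a contradiction.

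\textbf{Necessity.} Assume $\Delta/\delta<2h+2$. Choose a codeword $\boldsymbol{c}$ with $h_m(\boldsymbol{c})$ close enough to $h$, or attaining $h$ by compactness after normalization, that after scaling we get $|c|_{(m)}\le 2\delta$ and $|c|_{(0)}>\Delta-2\delta$. Split the indices as follows: the top index $i_0$, a set $A$ of the next $\tau$ or $\tau-1$ largest indices, a set $B$ of $\sigma$ further indices, and the rest, where every entry is at most $2\delta$. Write $\boldsymbol{c}=\boldsymbol{u}-\boldsymbol{v}$ with these constraints:
\begin{itemize}
\item[] $\boldsymbol{u}$ carries $\{i_0\}\cup B'$ plus noise $\le\delta$ in magnitude elsewhere;
\item[] $\boldsymbol{v}$ carries a $\tau$-set plus noise $\le\delta$ in magnitude elsewhere, where the noise splits each small entry as $c_j=\varepsilon_j-\varepsilon'_j$.
\end{itemize}
The received word $\boldsymbol{y}=\boldsymbol{u}+\boldsymbol{\varepsilon}=\boldsymbol{v}+\boldsymbol{\varepsilon}'$ (up to codeword shift) then arises both from a $\tau$-error pattern not containing $i_0$ and from a $(\tau+\sigma)$-pattern with $|e_{i_0}|>\Delta$. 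By (D1) the decoder must output a subset of the first support, so it excludes $i_0$. By (D2) it must include $i_0$, a contradiction.

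The main obstacle is the bookkeeping of which $\tau$ versus $\tau+\sigma$ indices go into each decomposition, together with making the minimality rule in the decoder certify (D1). I would handle (D1) by outputting the intersection of all feasible supports of size at most $\tau$. Any two such supports, together with the true support, span at most $2\tau\le m$ positions, which keeps the argument within the $h_m$ regime.
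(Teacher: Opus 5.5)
The paper does not prove this statement. It is quoted from Roth \cite{AECC2020} and used only to define $\Gamma_m$, so there is no in-paper argument to compare with. Your two-direction plan is the standard one and is sound in outline, and sufficiency is correct once you commit to the intersection decoder. The necessity construction, as written, does not go through.

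\textbf{Sufficiency.} Use only the intersection rule: $\mathcal D(\boldsymbol y)=\bigcap J$ over all feasible $J$ with $|J|\le\tau$, and ``e'' if there is none.
\begin{itemize}
\item[] \emph{(D1).} This is then immediate, because for $\boldsymbol e\in\mathcal B(n,\tau)$ the true support is itself feasible. No counting is needed, and your closing remark about two supports plus the true support can be dropped.
\item[] \emph{The ``canonical minimal'' option.} This is not equivalent to the intersection and can violate (D1). Take $\mathcal C=\operatorname{span}\{(1,1,1)\}$ (so $h_2(\mathcal C)=1$), $\tau=1$, $\sigma=0$, $\Delta=4\delta$, and $\boldsymbol y=(0,4\delta,2\delta)$. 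Both $\{0\}$ and $\{1\}$ are true supports of valid decompositions, with codewords $3\delta(1,1,1)$ and $\delta(1,1,1)$ respectively. The empty set is infeasible. Hence (D1) forces $\mathcal D(\boldsymbol y)=\emptyset$, which no single minimal set delivers.
\item[] \emph{(D2).} Your argument is correct: $|\operatorname{Supp}_0(\boldsymbol e)\cup J|\le(\tau+\sigma)+\tau=m$, so every feasible $J$ contains $\operatorname{Supp}_\Delta(\boldsymbol e)$, and therefore so does the intersection.
\end{itemize}

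\textbf{Necessity.} The two-explanations idea is right, but your index split is wrong.
\begin{itemize}
\item[] \emph{What fails.} After scaling you only control $|c|_{(m)}\le 2\delta$, so up to $m=2\tau+\sigma$ coordinates of $\boldsymbol c$ can be large. Your sets $\{i_0\}\cup A\cup B$ contain only $\tau+\sigma$ or $\tau+\sigma+1$ indices. The remaining $\tau$ or $\tau-1$ large coordinates are placed in ``the rest'' with the false claim that they are at most $2\delta$, and noise cannot absorb them.
\item[] \emph{The fix.} The missing step is exactly the identity $m=(\tau+\sigma)+\tau$. Let $T$ index the $m$ largest magnitudes, with $i_0\in T$ the top one. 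The $(\tau+\sigma)$-error pattern takes $i_0$ together with any $\tau+\sigma-1$ further indices of $T$. The $\tau$-error pattern takes the other $\tau$ indices of $T$.
\item[] \emph{Noise.} Off $T$, write $c_k=\varepsilon_k-\varepsilon'_k$ with both terms of magnitude at most $\delta$. At $i_0$, give the two noise values opposite signs so that the large error has magnitude $|c_{i_0}|+2\delta>\Delta$.
\end{itemize}
With this split your (D1)/(D2) contradiction goes through.

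The construction needs $\tau+\sigma\ge1$ in order to place $i_0$ in the larger pattern. The statement itself needs this too. For $\tau=\sigma=0$, the constant decoder $\mathcal D\equiv\emptyset$ works for every $\Delta$, while the stated condition reads $\Delta/\delta\ge 4$.
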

For $m\in[0,n+1]$, define $\Gamma_m(\mathcal{C})=2h_{m}(\mathcal{C})+2$.
According to theorem \ref{TH.01}, $\Gamma_{2\tau +\sigma}(\mathcal{C})$  serves as the lower bound of the error-correction threshold of code $\mathcal{C}$; however, its exact value as expressed is difficult to characterize.
A key challenge is to determine 
\begin{align}\nonumber
\Gamma_m(n,k):=\inf\{\Gamma_m(\mathcal C): \mathcal C\subseteq\mathbb R^n \text{ is a linear } [n,k] \text{ code}\}.
\end{align}

\textbf{Related works.}
Roth's work \cite{AECC2019,AECC2020} introduced the framework of Analog ECCs and focused on constructions for single-error detection and single-error correction $[n,k]$ linear codes. 
Wei \emph{et al.} in \cite{AECC2024} proposed constructions for multiple-error-correcting Analog ECCs. 
The works in \cite{AECC20242,roth2026height} developed algorithms based on LP problems for computing general $m$-heights $h_m(\mathcal{C})$. 
Jiang \emph{et al.} \cite{2025Analog} provided explicit constructions of single-error-correcting codes with $\Gamma_2(\mathcal{C})=O(n^2)$ and  efficient decoding algorithms. 
Zhu \emph{et al.} \cite{zhu2026new} proposed a geometric construction that handles multiple outliers and analyzed its height profiles.
Song \emph{et al.} \cite{song2026AECC} proposed constructions of single-error correction codes with redundancy $r=3$, whose error-correction threshold attains $\Gamma_2(\mathcal{C})=O(n\sqrt{n})$.
Recently, Jiang \emph{et al.} \cite{AECC202605} derived tight lower bound for the single-error detection threshold $\Gamma_1(\mathcal C)\geq \frac{2n}{n-k}$ under the condition $2 \le n-k \mid k$. 
Li \emph{et al.} \cite{AECC202606} extended the results of \cite{AECC202605} to the case where $n-k$ does not divide $k$, establishing lower bound for $\Gamma_1(\mathcal{C})\geq 2\left\lceil \frac{n}{n-k} \right\rceil$, 
and they proved the tight lower bound $\Gamma_2(\mathcal C)\ge \csc^2(\frac{\pi}{2n})$ for $r=2$ and constructed codes with $\Gamma_2(\mathcal{C}) = O(n^{1+\frac{1}{r-1}})$ for fixed $r\ge2$.

Following the threshold bound for single-error-correcting Analog ECCs, this work proves the lower bound for $\Gamma_2(\mathcal{C})$ at arbitrary fixed redundancy $r\geq 2$, thereby confirming the optimality of the exponent $n^{1+\frac{1}{r-1}}$.

\subsection{Mathematical Preliminaries}

In this section, we first introduce in Table~\ref{tab:notation} the mathematical notation commonly used in this paper.
We next present the main convex optimization and geometric tools employed in this paper: the strict separation theorem, the symmetric John's ellipsoid theorem, and the support function.
\begin{table}[htbp]
\centering
\caption{Summary of Notation.}
\label{tab:notation}
\renewcommand{\arraystretch}{1.15}
\begin{tabularx}{\textwidth}{>{\centering\arraybackslash}p{0.20\textwidth}|>{\raggedright\arraybackslash}X}
\hline
\hline
Notation & Description \\
\hline
$[a,b]$& For non-negative integers $a$ and $b$ with $a<b$, $[a,b]:=\{a,a+1,\ldots,b\}$.\\
\hline
$\mathbb N$&The set of nonnegative integers.\\
\hline
$\mathbb R$ & Field of real numbers. \\
\hline
$\mathbb S^{n}$ & Unit sphere in $\mathbb R^{n+1}$. \\
\hline
$I_n$&The identity matrix on $\mathbb{R}^{n \times n}$.\\
\hline
$A^T$&The transpose of matrix $A$.\\
\hline
$rank(A)$&The rank of $A\in\mathbb{R}^{m\times n}$.\\
\hline
$\det(A)$& The determinant of matrix $A$.\\
\hline
$\operatorname{Ker} (A)$& The kernel space of matrix $A\in\mathbb R^{m\times n}$, i.e.,
$\operatorname{Ker} (A) = \left\{ \boldsymbol{x} \in \mathbb{R}^n \mid A \boldsymbol{x} = \boldsymbol{0} \right\}.$\\
\hline
$\|\boldsymbol{x}\|_1$, $\|\boldsymbol{x}\|_2$, $\|\boldsymbol{x}\|_\infty$ & $\ell_1$-norm, $\ell_2$-norm, and $\ell_\infty$-norm of $\boldsymbol{x}\in\mathbb R^n$ \\
\hline
\(\boldsymbol{e}_i\)&The unit vector whose $i$-th component is 1 and all others are 0.\\
\hline
$[\boldsymbol{v}]$& For nonzero $\boldsymbol{v} \in \mathbb R^n$, define $
[\boldsymbol{v}] = \operatorname{span}\{\boldsymbol{v}\} = \{\lambda \boldsymbol{v} \mid \lambda \in \mathbb{R}\}$. This is the line through the origin generated by $\boldsymbol{v}$.\\
\hline
$\langle \boldsymbol{x}, \boldsymbol{y} \rangle$& The inner product of $\boldsymbol{x}$ and $\boldsymbol{y}$, where $\boldsymbol{x},\boldsymbol{y}\in\mathbb{R}^n$.\\
			\hline
$B_2^n$ & The unit ball in $\mathbb{R}^n$, i.e., $B_2^n=\left\{ \boldsymbol{x} \in \mathbb{R}^n | \|\boldsymbol{x}\|_2 \le 1 \right\}$.\\
\hline
$\Gamma(x)$ &The Gamma function, i.e., $\Gamma(x)=\int_{0}^{\infty} t^{x-1} e^{-t} dt, \  x > 0.$\\
\hline
$\mathbb E[X]$& The expectation of the random variable $X$.\\
\hline
Euclidean ball&For a center $\boldsymbol{x}_0$ $\in \mathbb{R}^n$ and radius $r > 0$, the (closed) Euclidean ball is defined as $\left\{ \boldsymbol{x} \in \mathbb{R}^n : \|\boldsymbol{x}-\boldsymbol{x}_0\|_2 \le r \right\}$.\\
\hline
Ellipsoid &\makecell{ A (closed) ellipsoid in $\mathbb{R}^n$ is defined via a symmetric positive definite matrix $Q$ and a center \\$\boldsymbol{x}_0 \in \mathbb{R}^n$ as
$\mathcal{E} = \left\{ \boldsymbol{x} \in \mathbb{R}^n \mid (\boldsymbol{x} - \boldsymbol{x}_0)^T Q^{-1} (\boldsymbol{x} - \boldsymbol{x}_0) \le 1 \right\}.$}\\
\hline
\hline
\end{tabularx}
\end{table}

\begin{theorem}[Strict separation of a point and a closed convex set{\cite{boyd2004convex}}]\label{TH.2.2}	Let $\Omega \subseteq \mathbb{R}^n$ be a nonempty, closed, convex set, and let $\boldsymbol{y} \in \mathbb{R}^n$ be a point such that $\boldsymbol{y} \notin \Omega$. Then there exists a nonzero vector $\boldsymbol{u} \in \mathbb{R}^n$ and a scalar $c \in \mathbb{R}$ such that		$$	\langle \boldsymbol{u}, \boldsymbol{y} \rangle > c \quad \text{and} \quad \langle \boldsymbol{u}, \boldsymbol{x} \rangle \leq c \quad \text{for all } \boldsymbol{x} \in \Omega.	$$		Equivalently, there exists a hyperplane $H = \{\boldsymbol{z} \in \mathbb{R}^n : \langle \boldsymbol{u}, \boldsymbol{z} \rangle = c\}$ that strictly separates $\boldsymbol{y}$ from $\Omega$ (i.e., $\boldsymbol{y}$ lies in the open half-space and $\Omega$ lies in the closed half-space determined by $H$).\end{theorem}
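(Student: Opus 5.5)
The statement is the strict separation theorem for a point and a nonempty closed convex set $\Omega\subseteq\mathbb R^n$. I would prove it through the Euclidean projection of $\boldsymbol{y}$ onto $\Omega$.

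First I would show that a nearest point exists. Fix any $\boldsymbol{x}_1\in\Omega$ and set $R=\|\boldsymbol{y}-\boldsymbol{x}_1\|_2$. The set $\Omega\cap\{\boldsymbol{x}:\|\boldsymbol{x}-\boldsymbol{y}\|_2\le R\}$ is nonempty, closed and bounded, hence compact. The continuous function $\boldsymbol{x}\mapsto\|\boldsymbol{x}-\boldsymbol{y}\|_2$ therefore attains its minimum there at some $\boldsymbol{x}^\ast$, and $\boldsymbol{x}^\ast$ also minimizes the distance over all of $\Omega$. Since $\boldsymbol{y}\notin\Omega$, we have $\boldsymbol{x}^\ast\neq\boldsymbol{y}$. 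Convexity would give uniqueness, but uniqueness is not needed.

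Next I would derive the variational inequality, which is the key step. For any $\boldsymbol{x}\in\Omega$ and $t\in(0,1]$, convexity puts $\boldsymbol{x}^\ast+t(\boldsymbol{x}-\boldsymbol{x}^\ast)$ in $\Omega$. Minimality then gives
\[
\|\boldsymbol{y}-\boldsymbol{x}^\ast-t(\boldsymbol{x}-\boldsymbol{x}^\ast)\|_2^2\ge\|\boldsymbol{y}-\boldsymbol{x}^\ast\|_2^2 .
\]
Expanding the left side, cancelling the common term, dividing by $t$ and letting $t\to0^+$ yields
\[
\langle \boldsymbol{y}-\boldsymbol{x}^\ast,\ \boldsymbol{x}-\boldsymbol{x}^\ast\rangle\le 0 \quad\text{for all } \boldsymbol{x}\in\Omega .
\]

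Finally I would define $\boldsymbol{u}=\boldsymbol{y}-\boldsymbol{x}^\ast\neq\boldsymbol{0}$ and $c=\langle\boldsymbol{u},\boldsymbol{x}^\ast\rangle$. The variational inequality says exactly that $\langle\boldsymbol{u},\boldsymbol{x}\rangle\le c$ for every $\boldsymbol{x}\in\Omega$. On the other hand,
\[
\langle\boldsymbol{u},\boldsymbol{y}\rangle-c=\langle\boldsymbol{u},\boldsymbol{y}-\boldsymbol{x}^\ast\rangle=\|\boldsymbol{u}\|_2^2>0,
\]
so $\langle\boldsymbol{u},\boldsymbol{y}\rangle>c$. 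The hyperplane formulation in the statement is just a restatement of these two inequalities.

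The only delicate step is the variational inequality, since the $t\to0^+$ limit is where convexity is used. It is still routine, so I expect no real obstacle.
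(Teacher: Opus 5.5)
Your proof is correct and complete. The compactness step gives a nearest point $\boldsymbol{x}^\ast\neq\boldsymbol{y}$. The one-sided limit $t\to0^+$ gives $\langle \boldsymbol{y}-\boldsymbol{x}^\ast,\boldsymbol{x}-\boldsymbol{x}^\ast\rangle\le 0$ on $\Omega$. Then $\boldsymbol{u}=\boldsymbol{y}-\boldsymbol{x}^\ast$ and $c=\langle\boldsymbol{u},\boldsymbol{x}^\ast\rangle$ separate with margin $\|\boldsymbol{u}\|_2^2>0$.

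The paper gives no proof of this statement. It quotes it from Boyd and Vandenberghe and uses it only as a black box, in Corollary~\ref{TH.2.4} and Lemma~\ref{lem:boxduality}. The cited textbook, as I recall it, reaches the result by a slightly different route. It first establishes the separating hyperplane theorem for two disjoint convex sets, proved there for the case of a positive, attained distance via the perpendicular bisector of a closest pair. It then applies this to $\Omega$ and a ball $B(\boldsymbol{y},\epsilon)$ that misses $\Omega$; such an $\epsilon>0$ exists because $\Omega$ is closed. Minimizing over the ball gives a normal $\boldsymbol{a}\neq\boldsymbol{0}$ and offset $b$ with $\langle\boldsymbol{a},\boldsymbol{x}\rangle\le b$ on $\Omega$ and $\langle\boldsymbol{a},\boldsymbol{y}\rangle\ge b+\epsilon\|\boldsymbol{a}\|_2>b$.

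Your projection argument is more direct and self-contained, and it produces an explicit normal and margin. The textbook route gets the point case as a corollary of the more general two-set theorem. Both ultimately rest on the same closest-point idea.
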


\begin{theorem}[Symmetric John's ellipsoid theorem {\cite{ball1997}}]\label{thm:symmetric-john}
Let $K\subseteq\mathbb R^r$ be a centrally symmetric convex body, meaning that $K$ is compact and convex, has nonempty interior in $\mathbb R^r$, and satisfies $K=-K$. If $E$ is the ellipsoid of maximum volume contained in $K$, then $E$ is centered at the origin and
\[
E\subseteq K\subseteq \sqrt r\,E.
\]
\end{theorem}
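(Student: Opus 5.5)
We follow the classical direct argument, as in \cite{ball1997}, in three steps: existence and centering of $E$, a linear normalization, and a volume-increasing perturbation argument for the outer inclusion.

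\textbf{Existence and centering of $E$.} Write an ellipsoid as $E=A B_2^r+\boldsymbol{x}_0$ with $A$ symmetric positive definite; its volume is proportional to $\det A$. The admissible pairs $(A,\boldsymbol{x}_0)$ with $E\subseteq K$ form a compact set, because $K$ is bounded. They also include a small ball, because $K$ has nonempty interior. So a maximizer $E$ exists.

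If $E=AB_2^r+\boldsymbol{x}_0\subseteq K$, then $K=-K$ gives $-E=AB_2^r-\boldsymbol{x}_0\subseteq K$. By convexity, $\tfrac12 E+\tfrac12(-E)=AB_2^r\subseteq K$. This ellipsoid has the same volume as $E$ and is centered at the origin.

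Uniqueness of the maximizer then forces $E$ itself to be centered at the origin. We prove uniqueness by log-concavity of $\det$ on positive definite matrices. For two maximizers $A_1B_2^r+\boldsymbol{x}_1$ and $A_2B_2^r+\boldsymbol{x}_2$, the set $\tfrac{A_1+A_2}{2}B_2^r+\tfrac{\boldsymbol{x}_1+\boldsymbol{x}_2}{2}$ lies in $K$, since it is contained in the Minkowski average of the two ellipsoids. Its volume is strictly larger unless $A_1=A_2$. Uniqueness of the center, given equal shapes, again follows from the averaging and convexity. This already yields $E\subseteq K$ with $E$ centered at $\boldsymbol{0}$.

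\textbf{Normalization.} Applying the linear map $A^{-1}$ preserves ratios of volumes and the statement. We may therefore assume $E=B_2^r$.

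\textbf{Outer inclusion.} Suppose, for contradiction, that some $\boldsymbol{x}\in K$ has $\|\boldsymbol{x}\|_2=R>\sqrt r$. By symmetry and convexity, $K\supseteq C:=\operatorname{conv}(B_2^r\cup\{\pm\boldsymbol{x}\})$. Rotate so that $\boldsymbol{x}=R\boldsymbol{e}_1$, and consider the ellipsoids
\[
E_{a,b}=\Bigl\{(t,\boldsymbol{y}) : \tfrac{t^2}{a^2}+\tfrac{\|\boldsymbol{y}\|_2^2}{b^2}\le 1\Bigr\}.
\]
Both $E_{a,b}$ and $C$ are rotationally symmetric about the $\boldsymbol{e}_1$-axis. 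Hence containment reduces to the planar problem of fitting the ellipse with semi-axes $a,b$ inside the convex hull of the unit disc and the points $(\pm R,0)$. Comparing support functions in a direction at angle $\theta$ gives the sufficient condition
\[
\frac{a^2}{R^2}+b^2\Bigl(1-\frac{1}{R^2}\Bigr)\le 1 .
\]
To see this, check $\sqrt{a^2\cos^2\theta+b^2\sin^2\theta}\le \max(R|\cos\theta|,1)$, splitting according to whether the tangent direction touches the segment from $(R,0)$ or the disc, and applying Cauchy--Schwarz on the tangent line to the disc through $(R,0)$.

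Take $b^2=\frac{1-a^2/R^2}{1-1/R^2}$, so that $a=b=1$ recovers $B_2^r$. Then
\[
\frac{d}{da}\log\bigl(a\,b^{r-1}\bigr)\Big|_{a=1}=1-\frac{r-1}{R^2-1},
\]
which is positive exactly when $R^2>r$. So for $a$ slightly larger than $1$, the ellipsoid $E_{a,b}\subseteq C\subseteq K$ has volume exceeding that of $B_2^r$. This contradicts maximality, and hence $K\subseteq\sqrt r\,B_2^r=\sqrt r\,E$.

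The main obstacle is the planar containment inequality: the support-function comparison must be done carefully at the transition angle between the segment and the disc. If that computation becomes messy, the fallback is John's characterization via contact points, $\sum c_i \boldsymbol{u}_i\boldsymbol{u}_i^T=I_r$ with $\sum c_i=r$. From it, any $\boldsymbol{x}\in K$ satisfies
\[
\|\boldsymbol{x}\|_2^2=\sum c_i\langle \boldsymbol{x},\boldsymbol{u}_i\rangle^2\le \sum c_i=r,
\]
using $|\langle\boldsymbol{x},\boldsymbol{u}_i\rangle|\le1$ from the supporting hyperplanes at the contact points $\pm\boldsymbol{u}_i$.
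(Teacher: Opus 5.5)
The paper does not prove this statement: it cites it from \cite{ball1997} and uses it only in Lemma~\ref{lem:john}. Your main argument is correct in substance but takes a different route from Ball's.

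Ball goes through the contact-point characterization: $B_2^r$ is the maximal ellipsoid iff there are contact points $\boldsymbol{u}_i$ and weights $c_i>0$ with $\sum c_i\boldsymbol{u}_i\boldsymbol{u}_i^T=I_r$. The symmetric bound is then exactly the one-line computation in your fallback. So the fallback is not independent: it invokes the hard (necessity) direction of John's theorem, which needs a separation argument in matrix space.

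Your main argument is a direct variational one. You compare $B_2^r$ with the centered ellipsoids $E_{a,b}$ inside $\operatorname{conv}(B_2^r\cup\{\pm\boldsymbol{x}\})$, and your derivative $1-\frac{r-1}{R^2-1}$ is right. It is self-contained and elementary, but it gives only the inclusion, not the contact-point structure that Ball's route yields.

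It also shows you need less than you prove. All competitors $E_{a,b}$ are centered, so it is enough to take $E$ of maximal volume among origin-centered ellipsoids in $K$, which is a compact problem. For the paper's use in Lemma~\ref{lem:john}, the uniqueness and centering discussion can then be dropped.

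Three steps need tightening.

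First, your displayed planar condition is not sufficient on its own. Set $c=\cos^2\theta\in[0,1]$. The condition $a^2c+b^2(1-c)\le\max(1,R^2c)$ has a concave piecewise-linear difference of sides, so it must be checked at $c=0$, $c=1/R^2$ and $c=1$. This adds $b\le 1$ and $a\le R$ to your inequality. For instance, any $a<1$ with equality in your condition forces $b>1$, and the ellipsoid then sticks out at $\theta=\pi/2$. For $a$ slightly above $1$ with your choice of $b$, both extra conditions hold automatically, so the argument survives, but you should say so.

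Second, ``uniqueness of the center given equal shapes'' does not follow from averaging, which only produces another ellipsoid of the same volume. You need a convex-hull step. After normalizing, suppose $B_2^r\pm\boldsymbol{y}\subseteq K$ with $s=\|\boldsymbol{y}\|_2>0$. Then $\operatorname{conv}\bigl((B_2^r+\boldsymbol{y})\cup(B_2^r-\boldsymbol{y})\bigr)$ contains the centered ellipsoid with semi-axes $1+s,1,\ldots,1$ along $\boldsymbol{y}$. This follows from the support-function comparison $1+(2s+s^2)u_1^2\le(1+s|u_1|)^2$, and that ellipsoid has strictly larger volume. Applying this to $E$ and $-E$ proves centering directly, without the log-concavity step.

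Third, for compactness in the existence step, allow positive semidefinite $A$ and note that the maximizer has $\det A>0$.
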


Next, we introduce an important tool, the \emph{support function}.
Let \( S \subseteq \mathbb{R}^n \) be a nonempty set. Its support function \( h_S : \mathbb{R}^n \to \mathbb{R} \cup \{+\infty\} \) is defined by
\[
h_S(\boldsymbol{y}) = \sup_{\boldsymbol{x} \in S} \, \langle \boldsymbol{y}, \boldsymbol{x} \rangle.
\]
When $S$ is a closed convex set, the support function serves as an equivalent characterization of set inclusion. Specifically, we prove the following result.

\begin{corollary}\label{TH.2.4}
    Let $S \subseteq \mathbb{R}^n$ be a nonempty closed convex set, and let $A \subseteq \mathbb{R}^n$ be nonempty. Then
\[
A \subseteq S \quad \Longleftrightarrow \quad h_A(\boldsymbol{y}) \le h_S(\boldsymbol{y}),\ \text{for all } \boldsymbol{y} \in \mathbb{R}^n.
\]
\end{corollary}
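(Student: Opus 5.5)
The forward direction is immediate from the definition of the support function. Suppose $A\subseteq S$ and fix any $\boldsymbol{y}\in\mathbb R^n$. Then $h_A(\boldsymbol{y})=\sup_{\boldsymbol{x}\in A}\langle \boldsymbol{y},\boldsymbol{x}\rangle$ is a supremum over a subset of the set used for $h_S(\boldsymbol{y})$, so $h_A(\boldsymbol{y})\le h_S(\boldsymbol{y})$. This holds in the extended reals $\mathbb R\cup\{+\infty\}$, so no finiteness assumption is needed.

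For the converse I will argue by contraposition using the strict separation theorem (Theorem~\ref{TH.2.2}). Assume $A\not\subseteq S$ and pick a point $\boldsymbol{a}\in A$ with $\boldsymbol{a}\notin S$. Since $S$ is nonempty, closed and convex, Theorem~\ref{TH.2.2} gives a nonzero $\boldsymbol{u}\in\mathbb R^n$ and $c\in\mathbb R$ such that
\[
\langle \boldsymbol{u},\boldsymbol{a}\rangle>c
\quad\text{and}\quad
\langle \boldsymbol{u},\boldsymbol{x}\rangle\le c \ \text{ for all } \boldsymbol{x}\in S.
\]
Taking the supremum over $\boldsymbol{x}\in S$ gives $h_S(\boldsymbol{u})\le c<+\infty$. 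On the other hand, $h_A(\boldsymbol{u})\ge\langle \boldsymbol{u},\boldsymbol{a}\rangle>c$. Hence $h_A(\boldsymbol{u})>h_S(\boldsymbol{u})$, which contradicts the hypothesis at $\boldsymbol{y}=\boldsymbol{u}$.

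The only point needing care is that $h_S$ may take the value $+\infty$ in general, so the comparison must be read in the extended reals. This causes no difficulty, because at the separating direction $\boldsymbol{u}$ the value $h_S(\boldsymbol{u})$ is finite and the strict inequality is genuine.

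The hypotheses on $S$ are used exactly once, to invoke the separation theorem. Convexity and closedness of $S$ are essential there: without them the equivalence fails. The natural statement in that case is that $h_A\le h_S$ characterizes $A\subseteq\overline{\operatorname{conv}}\,S$.
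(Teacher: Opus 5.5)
Your proof is correct and follows essentially the same route as the paper. The forward direction is the observation that a supremum over the subset $A\subseteq S$ is at most the supremum over $S$, and the converse applies strict separation (Theorem~\ref{TH.2.2}) to a point of $A\setminus S$; your remark that $h_S(\boldsymbol{u})\le c<+\infty$ at the separating direction makes explicit a step the paper leaves implicit.
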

\begin{proof}
    We prove each direction separately.
    
($\Rightarrow$) Assume $A \subseteq S$. For any fixed $\boldsymbol{y} \in \mathbb{R}^n$, since every $\boldsymbol{x} \in A$ also belongs to $S$, we have
\[
\langle \boldsymbol{y}, \boldsymbol{x} \rangle \le \sup_{\boldsymbol{s} \in S} \langle \boldsymbol{y}, \boldsymbol{s} \rangle = h_S(\boldsymbol{y}).
\]
Taking the supremum over all $\boldsymbol{x} \in A$ yields
\[
h_A(\boldsymbol{y}) = \sup_{\boldsymbol{x} \in A} \langle \boldsymbol{y}, \boldsymbol{x} \rangle \le h_S(\boldsymbol{y}),\ \forall \boldsymbol{y}\in\mathbb R^n.
\]

($\Leftarrow$) Conversely, suppose that
\[
h_A(\boldsymbol{y}) \le h_S(\boldsymbol{y}), \   \text{for all } \boldsymbol{y} \in \mathbb{R}^n.
\]
We claim that $A \subseteq S$. Take any $\boldsymbol{x} \in A$ and assume, for contradiction, that $\boldsymbol{x} \notin S$.

Since $S$ is closed and convex, Theorem \ref{TH.2.2} guaranties the existence of a vector $\boldsymbol{y} \in \mathbb{R}^n$ such that
\[
\langle \boldsymbol{y}, \boldsymbol{x} \rangle > \sup_{\boldsymbol{s} \in S} \langle \boldsymbol{y}, \boldsymbol{s} \rangle = h_S(\boldsymbol{y}).
\]
Then we have,
\[
h_A(\boldsymbol{y}) \ge \langle \boldsymbol{y}, \boldsymbol{x} \rangle > h_S(\boldsymbol{y}),
\]
which contradicts the assumption that $h_A(\boldsymbol{y}) \le h_S(\boldsymbol{y})$ for all $\boldsymbol{y}\in\mathbb R^n$.
Therefore, the proof is complete.
\end{proof}

\textbf{Remark.}
Note that, for any $\boldsymbol{y} \neq \boldsymbol{0}$, $h_A(\boldsymbol{y}) \le h_S(\boldsymbol{y})$ is equivalent to $h_A\left(\boldsymbol{\frac{\boldsymbol{y}}{\|\boldsymbol{y}\|_2}}\right) \le h_S\left(\boldsymbol{\frac{\boldsymbol{y}}{\|\boldsymbol{y}\|_2}}\right)$; therefore, the statement of Corollary \ref{TH.2.4} can be reduced to
\begin{align*}
    A \subseteq S \quad \Longleftrightarrow \quad h_A(\boldsymbol{y}) \le h_S(\boldsymbol{y}),\  \text{for all } \boldsymbol{y} \in \mathbb{S}^{n-1}.
\end{align*}

\section{An Exact Ordered-Pair Dual Expression for $\Gamma_2(\mathcal{C})$}\label{sec:duality}

\subsection{Preliminaries on the Parity Matrix}
Let $H=(\boldsymbol{h}_0,\boldsymbol{h}_1,\ldots,\boldsymbol{h}_{n-1})\in\mathbb R^{r\times n}$ be a parity matrix of an $[n,k=n-r]$ linear code $\mathcal C$ with $rank(H)=r$, so that $\mathcal C=\operatorname{Ker} (H)$. Define the centrally symmetric zonotope
\begin{align}\label{EQ2}
    S_H=\left\{\sum_{i=0}^{n-1} x_i \boldsymbol{h}_i\mid |x_i|\le 1, \ \forall i\in[0,n-1]\right\}.
\end{align}
$S_H$ describes the region in the syndrome space that is affected by noise.
The support function of $S_H$ is
\begin{align}\label{eq1}
h_{S_H}(\boldsymbol{u})&=\sup_{\boldsymbol{z}\in S_H}\langle \boldsymbol{u},\boldsymbol{z}\rangle\nonumber\\
&=\sup_{|x_i|\le 1, \ \forall i\in[0,n-1]}\langle \boldsymbol{u},\sum_{i=0}^{n-1} x_i \boldsymbol{h}_i\rangle\nonumber\\
&=\sum_{k=0}^{n-1} |\langle \boldsymbol{u},\boldsymbol{h}_k\rangle|=\|H^T\boldsymbol{u}\|_1. 
\end{align}
Since $H$ has full row rank (which means $\operatorname{Ker} (H^T)=\boldsymbol{0}$), Eq. \eqref{eq1} defines a norm on $\mathbb R^r$.
In the subsequent discussion, we will assume that the parity matrix $H$ satisfies certain non-degeneracy conditions. The justification for this assumption is given in the following lemma.

\begin{lemma}\label{lem:degenerate}
If one column of $H$ is $\boldsymbol{0}$ or if two columns of $H$ are proportional, then $\Gamma_2(\mathcal C)=+\infty$. Otherwise, every column is nonzero and the $n$ projective directions  $\{[\boldsymbol{h}_i]\}_{i=0}^{n-1}$ are distinct.
\end{lemma}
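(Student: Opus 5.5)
The plan is to exhibit, in each degenerate case, a nonzero codeword $\boldsymbol{c}\in\mathcal C=\operatorname{Ker}(H)$ whose Hamming weight is at most $2$. Such a codeword has $|c|_{(2)}=0$ and $|c|_{(0)}>0$, so $h_2(\boldsymbol{c})=+\infty$ by definition. Hence $h_2(\mathcal C)=+\infty$, and $\Gamma_2(\mathcal C)=2h_2(\mathcal C)+2=+\infty$.

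\emph{Case 1: a zero column.} Suppose $\boldsymbol{h}_i=\boldsymbol{0}$ for some $i$. Then $H\boldsymbol{e}_i=\boldsymbol{h}_i=\boldsymbol{0}$, so $\boldsymbol{e}_i\in\mathcal C\setminus\{\boldsymbol{0}\}$. Its ordered magnitudes are $1,0,0,\ldots$, so $|c|_{(2)}=0$ (note $n\ge 3$ since $r\ge 2$ and $k\ge 1$). Therefore $h_2(\boldsymbol{e}_i)=+\infty$.

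\emph{Case 2: two proportional columns.} Suppose $\boldsymbol{h}_j=\lambda\boldsymbol{h}_i$ for some $i\neq j$ and $\lambda\in\mathbb R$. We may assume both columns are nonzero, since otherwise Case 1 already applies; in particular $\lambda\neq 0$. Set $\boldsymbol{c}=\lambda\boldsymbol{e}_i-\boldsymbol{e}_j$. Then $H\boldsymbol{c}=\lambda\boldsymbol{h}_i-\boldsymbol{h}_j=\boldsymbol{0}$, so $\boldsymbol{c}$ is a nonzero codeword of Hamming weight $2$. Thus $|c|_{(2)}=0$ and again $h_2(\mathcal C)=+\infty$.

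\emph{The "otherwise" clause.} If no column is zero and no two columns are proportional, then every $\boldsymbol{h}_i\neq\boldsymbol{0}$, so each line $[\boldsymbol{h}_i]$ is well defined. Moreover, $[\boldsymbol{h}_i]=[\boldsymbol{h}_j]$ holds exactly when $\boldsymbol{h}_j=\lambda\boldsymbol{h}_i$ for some $\lambda\neq 0$, and this is excluded for $i\neq j$. So the $n$ projective directions are distinct.

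This is a direct translation of definitions and I expect no real obstacle. The only points needing care are the convention $h_m(\boldsymbol{c})=+\infty$ when $|c|_{(m)}=0$, and checking that $n\ge 3$, so that index $2$ exists in the ordered list of magnitudes.
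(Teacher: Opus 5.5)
Your proof is correct and follows the paper's argument: in each degenerate case you exhibit a nonzero codeword of Hamming weight at most two, which forces $|c|_{(2)}=0$ and hence $h_2(\mathcal C)=\Gamma_2(\mathcal C)=+\infty$, and the ``otherwise'' clause is immediate from the definitions. The paper's proof also notes that in the nondegenerate case every nonzero codeword has a positive third-largest magnitude, so compactness gives $\Gamma_2(\mathcal C)<+\infty$; this goes beyond what the lemma states, and your restriction to $\lambda\neq 0$ is harmless but unneeded, since $\lambda\boldsymbol{e}_i-\boldsymbol{e}_j\neq\boldsymbol{0}$ in any case.
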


\begin{proof}
If $\boldsymbol{h}_i=\boldsymbol{0}$ for certain $i\in[0,n-1]$, then the $i$-th standard basis vector belongs to $\operatorname{Ker} H$. If $\boldsymbol{h}_i=\lambda \boldsymbol{h}_t$ for $i\neq t$ and $\lambda\in\mathbb R$, then $\boldsymbol{e}_i-\lambda \boldsymbol{e}_t\in\operatorname{Ker} H$. In either case, $\mathcal C$ contains a nonzero codeword of Hamming weight at most two. Its third-largest coordinate is zero, so $h_2(\mathcal C)=+\infty$ and hence $\Gamma_2(\mathcal C)=+\infty$. Conversely, if all columns are nonzero and pairwise non-proportional, then $\operatorname{Ker} H$ contains no nonzero codeword supported on one or two coordinates. Hence every nonzero codeword has a strictly positive third-largest coordinate. Since the unit sphere in $\mathcal C$ is compact, this also implies that  $\Gamma_2(\mathcal{C})<+\infty$.
\end{proof}

By Lemma \ref{lem:degenerate}, the desired finite lower bound is automatic in the degenerate case. We henceforth assume that all columns of $H$ are nonzero and pairwise non-proportional.

\subsection{Dual Expression for $\Gamma_2(\mathcal{C})$}
In this section, we derive an exact ordered-pair dual expression for $\Gamma_2(\mathcal{C})$.
First, we prove an auxiliary lemma.

\begin{lemma}\label{lem:boxduality}
Let $B\in\mathbb R^{r\times m}$ and let $\boldsymbol{p},\boldsymbol{q}\in\mathbb R^r$ satisfy $\boldsymbol{p}\notin\operatorname{span}(\boldsymbol{q})$. Set
\[
P=\sup\{a\ge 0\mid \text{there exist }\ \boldsymbol{x}\in\mathbb R^m \ \text{with}\  \|\boldsymbol{x}\|_{\infty}\leq 1\ \text{and}\ \ b\in\mathbb R\ \text{ such that } B\boldsymbol{x}+a\boldsymbol{p}+b\boldsymbol{q}=\boldsymbol{0}\}.
\]
Then
\begin{align}\label{eq2}
P=\inf_{\substack{\langle \boldsymbol{u},\boldsymbol{q}\rangle=0\\ \langle \boldsymbol{u},\boldsymbol{p}\rangle=1}}\|B^T\boldsymbol{u}\|_1. 
\end{align}
\end{lemma}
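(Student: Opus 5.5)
The plan is to prove the two inequalities separately: weak duality by pairing the constraint with a dual vector, and strong duality by a separation argument using Theorem~\ref{TH.2.2}.

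First, the infimum in \eqref{eq2} is over a nonempty set. Since $\boldsymbol{p}\notin\operatorname{span}(\boldsymbol{q})$, there is a $\boldsymbol{u}$ with $\langle\boldsymbol{u},\boldsymbol{q}\rangle=0$ and $\langle\boldsymbol{u},\boldsymbol{p}\rangle=1$. For example, take the component of $\boldsymbol{p}$ orthogonal to $\boldsymbol{q}$ and rescale it.

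\emph{Weak duality ($P\le\inf$).} Let $(a,\boldsymbol{x},b)$ be feasible, so $B\boldsymbol{x}+a\boldsymbol{p}+b\boldsymbol{q}=\boldsymbol{0}$ with $\|\boldsymbol{x}\|_\infty\le1$. Let $\boldsymbol{u}$ be dual feasible. Taking the inner product of the constraint with $\boldsymbol{u}$ gives $\langle B^T\boldsymbol{u},\boldsymbol{x}\rangle+a=0$. Hence
\[
a=-\langle B^T\boldsymbol{u},\boldsymbol{x}\rangle\le\|B^T\boldsymbol{u}\|_1\|\boldsymbol{x}\|_\infty\le\|B^T\boldsymbol{u}\|_1 .
\]
Taking the supremum over feasible $a$ and the infimum over feasible $\boldsymbol{u}$ yields $P\le\inf$. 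In particular $P<\infty$. Also $P\ge0$, since $a=0$, $\boldsymbol{x}=\boldsymbol{0}$, $b=0$ is feasible.

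\emph{Strong duality ($P\ge\inf$).} Consider
\[
K=\{B\boldsymbol{x}+b\boldsymbol{q}\mid \|\boldsymbol{x}\|_\infty\le1,\ b\in\mathbb R\}.
\]
This set is convex and centrally symmetric. It is also closed, because it is the sum of the compact zonotope $\{B\boldsymbol{x}:\|\boldsymbol{x}\|_\infty\le1\}$ and the closed subspace $\operatorname{span}(\boldsymbol{q})$. Feasibility of $a$ means $-a\boldsymbol{p}\in K$, which by symmetry is equivalent to $a\boldsymbol{p}\in K$.

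Fix any $a>P$. Then $a\boldsymbol{p}\notin K$, so Theorem~\ref{TH.2.2} gives a $\boldsymbol{u}$ with $\langle\boldsymbol{u},a\boldsymbol{p}\rangle>h_K(\boldsymbol{u})$. We now extract three facts from this inequality:
\begin{itemize}
\item Because $h_K(\boldsymbol{u})$ is finite and $K$ contains the full line $\operatorname{span}(\boldsymbol{q})$, we must have $\langle\boldsymbol{u},\boldsymbol{q}\rangle=0$.
\item With $\langle\boldsymbol{u},\boldsymbol{q}\rangle=0$, the computation in \eqref{eq1} gives $h_K(\boldsymbol{u})=\|B^T\boldsymbol{u}\|_1\ge0$.
\item Since $a>P\ge0$ and $a\langle\boldsymbol{u},\boldsymbol{p}\rangle>h_K(\boldsymbol{u})\ge0$, we get $\langle\boldsymbol{u},\boldsymbol{p}\rangle>0$.
\end{itemize}
Rescale to $\boldsymbol{u}'=\boldsymbol{u}/\langle\boldsymbol{u},\boldsymbol{p}\rangle$. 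Then $\boldsymbol{u}'$ is dual feasible and $\|B^T\boldsymbol{u}'\|_1<a$. Since $a>P$ was arbitrary, $\inf\le P$, and together with weak duality this proves \eqref{eq2}.

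The main care point is the separation step. One must correctly use the unbounded direction $\boldsymbol{q}$ to force $\langle\boldsymbol{u},\boldsymbol{q}\rangle=0$, and use the sign of $\langle\boldsymbol{u},\boldsymbol{p}\rangle$ to allow the normalization. Both rely on the closedness of $K$, which is why the compact-plus-subspace observation is needed.
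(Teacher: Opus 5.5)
Your proof is correct and follows essentially the same route as the paper. The first direction is weak duality, obtained by pairing the constraint with a dual-feasible $\boldsymbol{u}$. The reverse inequality comes from strictly separating a multiple of $\boldsymbol{p}$ from the closed, centrally symmetric set $K=\{B\boldsymbol{x}+b\boldsymbol{q}\mid \|\boldsymbol{x}\|_\infty\le 1,\ b\in\mathbb R\}$; this forces $\langle\boldsymbol{u},\boldsymbol{q}\rangle=0$ and $\langle\boldsymbol{u},\boldsymbol{p}\rangle>0$, after which you normalize.

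The differences are cosmetic. You argue from an arbitrary $a>P$, which is the contrapositive of the paper's step ``$t<\lambda\Rightarrow t\boldsymbol{p}\in K$''. You also justify closedness of $K$ by the compact-plus-closed-subspace fact rather than the paper's quotient-map argument.
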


\begin{proof}
Since $\boldsymbol{p}\notin\operatorname{span}(\boldsymbol{q})$, the feasible set on the right-hand side of Eq. \eqref{eq2} is nonempty, i.e., the constraint  $\langle \boldsymbol{u},\boldsymbol{q}\rangle=0$ and $\langle \boldsymbol{u},\boldsymbol{p}\rangle=1$ is feasible.
For every primal feasible triple $(\boldsymbol{x},a,b)$ and every dual feasible $\boldsymbol{u}$, we have
\begin{align*}
    0=\langle \boldsymbol{u},B\boldsymbol{x}+a\boldsymbol{p}+b\boldsymbol{q}\rangle=\langle \boldsymbol{u},B\boldsymbol{x}\rangle+a\langle \boldsymbol{u},\boldsymbol{p}\rangle+b\langle \boldsymbol{u},\boldsymbol{q}\rangle
    =\langle \boldsymbol{u},B\boldsymbol{x}\rangle+a.
\end{align*}
Then we have 
\begin{align*}
    a=-\langle \boldsymbol{u},B\boldsymbol{x}\rangle\le \|B^T\boldsymbol{u}\|_1\|\boldsymbol{x}\|_\infty\le \|B^T\boldsymbol{u}\|_1.
\end{align*}
This proves the $\le$ direction in Eq. \eqref{eq2}.

For the reverse direction, let $$K = \left\{ B\boldsymbol{x} + \ell  \boldsymbol{q} \;\middle|\; \|\boldsymbol{x}\|_{\infty}\leq 1,\; \ell \in \mathbb{R} \right\}.$$ 
This set is convex and centrally symmetric. It is also closed. Let $C_m=\{\boldsymbol{x}\in\mathbb R^m\mid \|\boldsymbol{x}\|_\infty\le 1\}$, and write $BC_m=\{B\boldsymbol{x}\mid \boldsymbol{x}\in C_m\}$. Since $C_m$ is compact and $B$ is linear, $BC_m$ is compact. Let $\pi:\mathbb R^r\to \mathbb R^r/\operatorname{span}(\boldsymbol{q})$ be the quotient map. Since $\pi$ is continuous, $\pi(BC_m)$ is compact and hence closed in the finite-dimensional quotient space. Moreover, $K=\pi^{-1}(\pi(BC_m))$, so $K$ is closed. Write the right-hand side of Eq. \eqref{eq2} as $\lambda$. Suppose that $0\le t<\lambda$ but $t\boldsymbol{p}\notin K$. According to Theorem \ref{TH.2.2}, a point outside a closed convex set can be strictly separated from it. Hence, for some $\boldsymbol{v}\neq \boldsymbol{0}$, we have
\begin{align*}
    t\langle \boldsymbol{v},\boldsymbol{p}\rangle>\sup_{\boldsymbol{z}\in K}\langle \boldsymbol{v},\boldsymbol{z}\rangle.
\end{align*}
Note that, since
\begin{align}\label{eq3}
    \sup_{\boldsymbol{z}\in K}\langle \boldsymbol{v},\boldsymbol{z}\rangle=\sup_{\|\boldsymbol{x}\|_{\infty}\leq 1,\; \ell \in \mathbb{R}}\langle \boldsymbol{v},B\boldsymbol{x} + \ell  \boldsymbol{q}\rangle=\sup_{\|\boldsymbol{x}\|_{\infty}\leq 1,\; \ell \in \mathbb{R}}(\langle \boldsymbol{v},B\boldsymbol{x}\rangle + \ell  \langle\boldsymbol{v},\boldsymbol{q}\rangle)
\end{align}
and $\ell \in \mathbb R$, the supremum on the right-hand side of Eq. \eqref{eq3} can be finite only if $\langle \boldsymbol{v},\boldsymbol{q}\rangle=0$, 
in which case we have $\sup_{\boldsymbol{z}\in K}\langle \boldsymbol{v},\boldsymbol{z}\rangle=\|B^T\boldsymbol{v}\|_1$. 
Since $\|B^T\boldsymbol{v}\|_1$ is nonnegative, $\langle \boldsymbol{v},\boldsymbol{p}\rangle>0$. 
Therefore, $\boldsymbol{u}=\frac{\boldsymbol{v}}{\langle \boldsymbol{v},\boldsymbol{p}\rangle}$ is dual feasible and satisfies 
$$\|B^T\boldsymbol{u}\|_1=\frac{\|B^T\boldsymbol{v}\|_1}{\langle \boldsymbol{v},\boldsymbol{p}\rangle}<t$$ 
according to Eq. \eqref{eq3}, contradicting $t<\lambda$. 
Thus $t\boldsymbol{p}\in K$ for every $0\le t<\lambda$. Because $K=-K$, we also have $-t\boldsymbol{p}\in K$. 
Thus the original problem is feasible with $a=t$, and $P\ge \lambda$.
The proof is complete.
\end{proof}

Next, we establish the dual expression for $\Gamma_2(\mathcal{C})$.
For any ordered pair $(i,j)$ with $i,j\in [0,n-1]$ and $i\ne j$, define
\begin{align}\label{eq3.5}
    P_{ij}=\sup\{c_i \mid \boldsymbol{c}\in\mathcal C,\ |c_k|\le 1\text{ for every }k\in[0,n-1]\setminus\{i,j\}\}.
\end{align}
Since $\mathcal C$ is symmetric, the same value is obtained with $|c_i|$ in the objective.

\begin{theorem}\label{prop:dual}
For every parity matrix $H=(\boldsymbol{h}_0,\ldots,\boldsymbol{h}_{n-1})$ with full row rank whose columns are nonzero and pairwise non-proportional, we have
\begin{align}\label{eq3.6}
h_2(\mathcal C)=\max_{i\ne j} P_{ij}
\end{align}
and
\begin{align}\label{eq4}
P_{ij}=\min_{\substack{\langle \boldsymbol{u},\boldsymbol{h}_j\rangle=0\\ \langle \boldsymbol{u},\boldsymbol{h}_i\rangle=1}}\sum_{t\notin\{i,j\}}|\langle \boldsymbol{u},\boldsymbol{h}_t\rangle|.
\end{align}
Consequently,
\begin{align}\label{eq5}
\Gamma_2(\mathcal C)=2\max_{i\ne j}\min_{\substack{\langle \boldsymbol{u},\boldsymbol{h}_j\rangle=0\\ \langle \boldsymbol{u},\boldsymbol{h}_i\rangle=1}}\sum_{t=0}^{n-1} |\langle \boldsymbol{u},\boldsymbol{h}_t\rangle|.
\end{align}
Furthermore, the minimum of each of Eqs. \eqref{eq4} and \eqref{eq5} is attained.
\end{theorem}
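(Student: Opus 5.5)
The plan is to prove the three claims in order: first Eq.~\eqref{eq3.6}, then Eq.~\eqref{eq4} via Lemma~\ref{lem:boxduality}, then Eq.~\eqref{eq5}. Attainment is handled by a compactness or coercivity argument.

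\textbf{Step 1: Eq.~\eqref{eq3.6}.} I would prove two inequalities.
\begin{enumerate}
\item $h_2(\mathcal C)\le\max_{i\ne j}P_{ij}$. By Lemma~\ref{lem:degenerate}, every nonzero $\boldsymbol{c}\in\mathcal C$ has $|c|_{(2)}>0$, so we may rescale $\boldsymbol{c}$ so that $|c|_{(2)}=1$ and flip its sign so that the largest coordinate is positive. Let $i$ be the position of $|c|_{(0)}$ and $j$ the position of $|c|_{(1)}$. Every $k\notin\{i,j\}$ then has $|c_k|\le |c|_{(2)}=1$. So $\boldsymbol{c}$ is feasible for $P_{ij}$, and $h_2(\boldsymbol{c})=c_i\le P_{ij}$.
\item $P_{ij}\le h_2(\mathcal C)$. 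Take any feasible $\boldsymbol{c}$ for $P_{ij}$ with $c_i>0$. Deleting the two coordinates $i,j$ leaves a vector whose largest magnitude is at least the third largest magnitude of $\boldsymbol{c}$. Hence $|c|_{(2)}\le\max_{k\notin\{i,j\}}|c_k|\le 1$, while $|c|_{(0)}\ge c_i$. This gives $h_2(\boldsymbol{c})\ge c_i$, with the convention $+\infty$ if $|c|_{(2)}=0$, which cannot occur here anyway. Taking the supremum gives the claim.
\end{enumerate}

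\textbf{Finiteness and attainment for Step 1.} The linear map $\boldsymbol{c}\mapsto(c_k)_{k\notin\{i,j\}}$ is injective on $\mathcal C$. Indeed, a codeword vanishing off $\{i,j\}$ satisfies $c_i\boldsymbol{h}_i+c_j\boldsymbol{h}_j=\boldsymbol{0}$, and this forces $\boldsymbol{c}=\boldsymbol{0}$ because $\boldsymbol{h}_i,\boldsymbol{h}_j$ are nonzero and non-proportional. Therefore the feasible set of $P_{ij}$ is compact, so $P_{ij}$ is finite and attained. The outer maximum is over finitely many pairs.

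\textbf{Step 2: Eq.~\eqref{eq4}.} Fix $i\ne j$ and apply Lemma~\ref{lem:boxduality} with $B=(\boldsymbol{h}_t)_{t\notin\{i,j\}}$, $\boldsymbol{p}=\boldsymbol{h}_i$ and $\boldsymbol{q}=\boldsymbol{h}_j$. The hypothesis $\boldsymbol{p}\notin\operatorname{span}(\boldsymbol{q})$ holds by non-proportionality. A codeword $\boldsymbol{c}\in\operatorname{Ker}H$ is exactly a triple $(\boldsymbol{x},a,b)=((c_t)_{t\notin\{i,j\}},c_i,c_j)$ with $B\boldsymbol{x}+a\boldsymbol{p}+b\boldsymbol{q}=\boldsymbol{0}$. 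The constraint $\|\boldsymbol{x}\|_\infty\le1$ is precisely the feasibility condition in Eq.~\eqref{eq3.5}. Restricting to $a\ge0$ loses nothing, by the symmetry of $\mathcal C$. Hence $P=P_{ij}$, and Eq.~\eqref{eq2} gives the infimum form of Eq.~\eqref{eq4}.

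\textbf{Step 3: Eq.~\eqref{eq5}.} For dual-feasible $\boldsymbol{u}$ we have $|\langle\boldsymbol{u},\boldsymbol{h}_i\rangle|=1$ and $\langle\boldsymbol{u},\boldsymbol{h}_j\rangle=0$. So the full sum $\sum_{t=0}^{n-1}|\langle\boldsymbol{u},\boldsymbol{h}_t\rangle|$ equals the restricted sum plus $1$. Therefore
\[
\Gamma_2(\mathcal C)=2h_2(\mathcal C)+2=2\max_{i\ne j}(P_{ij}+1),
\]
which is Eq.~\eqref{eq5}.

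\textbf{Main obstacle: attainment of the minimum in Eq.~\eqref{eq4}.} The objective $f(\boldsymbol{u})=\sum_{t\notin\{i,j\}}|\langle\boldsymbol{u},\boldsymbol{h}_t\rangle|$ is continuous and convex on the nonempty affine feasible set. That set has direction space $L=\{\boldsymbol{w}:\langle\boldsymbol{w},\boldsymbol{h}_i\rangle=\langle\boldsymbol{w},\boldsymbol{h}_j\rangle=0\}$. On $L$, $f$ is a seminorm, and it is actually a norm: if $\boldsymbol{w}\in L$ and $f(\boldsymbol{w})=0$, then $H^T\boldsymbol{w}=\boldsymbol{0}$, so $\boldsymbol{w}=\boldsymbol{0}$ by full row rank. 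Fix a feasible $\boldsymbol{u}_0$. For $\boldsymbol{w}\in L$,
\[
f(\boldsymbol{u}_0+\boldsymbol{w})\ge f(\boldsymbol{w})-f(\boldsymbol{u}_0)\ge c\|\boldsymbol{w}\|_2-f(\boldsymbol{u}_0)
\]
for some constant $c>0$. So $f$ is coercive on the feasible set, its sublevel sets there are compact, and the infimum is a minimum. The same then holds for Eq.~\eqref{eq5}, since its inner objective differs from $f$ by the constant $1$.
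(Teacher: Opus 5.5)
Your proof is correct and follows essentially the same route as the paper: the two-sided comparison for Eq.~\eqref{eq3.6} by rescaling a codeword by $|c|_{(2)}$, Lemma~\ref{lem:boxduality} with $B=(\boldsymbol{h}_t)_{t\notin\{i,j\}}$, $\boldsymbol{p}=\boldsymbol{h}_i$, $\boldsymbol{q}=\boldsymbol{h}_j$ for Eq.~\eqref{eq4}, the identity ``full sum $=$ restricted sum $+1$'' for Eq.~\eqref{eq5}, and a norm-coercivity argument on the affine feasible set for attainment. Your additional primal compactness check (injectivity of $\boldsymbol{c}\mapsto(c_k)_{k\notin\{i,j\}}$ on $\mathcal C$) and your coercivity of the restricted sum on the direction space $L$ are slightly more explicit versions of what the paper gets from the duality and from $\|H^T\boldsymbol{u}\|_1$ being a norm on $\mathbb R^r$.
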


\begin{proof}
If a nonzero $\boldsymbol{c}$ is feasible in the definition of $P_{ij}$ in Eq. \eqref{eq3.5}, then at most two coordinates are unconstrained, so $|c|_{(2)}\le 1$ (otherwise, if $|c|_{(2)}> 1$, then $|c|_{(0)}\geq |c|_{(1)}\geq |c|_{(2)}>1$, which contradicts that $\boldsymbol{c}$ is feasible). Hence
\begin{align*}
    |c_i|\le |c|_{(0)}\le h_2(\boldsymbol{c})=\frac{|c|_{(0)}}{|c|_{(2)}}\le h_2(\mathcal C). 
\end{align*}
This proves $P_{ij}\le h_2(\mathcal C)$.

In contrast, take any nonzero $\boldsymbol{c}\in\mathcal C$. Choose $i$ so that $|c_i|=|c|_{(0)}$, and choose $j\ne i$ so that $|c_j|$ is the second-largest coordinate magnitude. Divide $\boldsymbol{c}$ by $|c|_{(2)}$. After changing the overall sign if necessary, the resulting codeword is feasible for $P_{ij}$, and the $i$-th coordinate equals $h_2(\boldsymbol{c})$.
Then we have $c_i=h_2(\boldsymbol{c})\leq P_{ij}$.
Taking the supremum over $\boldsymbol{c}$ proves that $$h_2(\mathcal C)=\sup_{\boldsymbol{c}\in\mathcal C\setminus\{\boldsymbol{0}\}} h_2(\boldsymbol{c})\leq \max_{i\ne j} P_{ij}.$$ 
Therefore, Eq. \eqref{eq3.6} is proved.

Fix $i\ne j$, let $R:=[0,n-1]\setminus\{i,j\}$, and write $H_R=(\boldsymbol{h}_t)_{t\in R}$. 
By the symmetry noted above, we may restrict to the case $a=c_i\geq 0$. Such a value is feasible for $P_{ij}$ exactly when there are $\boldsymbol{x}\in\mathbb{R}^{n-2}$ with $\|\boldsymbol{x}\|_{\infty}\leq 1$ and $b=c_j\in\mathbb R$ such that
\[
H_R \boldsymbol{x}+a \boldsymbol{h}_i+b \boldsymbol{h}_j=\boldsymbol{0}.
\]
Applying Lemma~\ref{lem:boxduality} with $B=H_R$, $\boldsymbol{p}=\boldsymbol{h}_i$, and $\boldsymbol{q}=\boldsymbol{h}_j$ gives 
\begin{align}\label{eq8}
    P_{ij}=\inf_{\substack{\langle \boldsymbol{u},\boldsymbol{h}_j\rangle=0\\ \langle \boldsymbol{u},\boldsymbol{h}_i\rangle=1}}\|H_R^T\boldsymbol{u}\|_1=\inf_{\substack{\langle \boldsymbol{u},\boldsymbol{h}_j\rangle=0\\ \langle \boldsymbol{u},\boldsymbol{h}_i\rangle=1}}\sum_{t\notin\{i,j\}}|\langle \boldsymbol{u},\boldsymbol{h}_t\rangle|. 
\end{align}

Note that, for any $\boldsymbol{u}$ in the feasible affine set in $P_{ij}$,
\[
\|H^T\boldsymbol{u}\|_1=\sum_{t=0}^{n-1} |\langle \boldsymbol{u},\boldsymbol{h}_t\rangle|=1+\sum_{t\notin\{i,j\}}|\langle \boldsymbol{u},\boldsymbol{h}_t\rangle|.
\]
Since $H$ has full row rank, $\|H^T\boldsymbol{u}\|_1$ is a norm on $\mathbb R^r$. Its restriction to the closed feasible affine set has compact sublevel sets, so the minimum is attained. The minimum in Eq. \eqref{eq8} is attained as well because the excluded two terms are fixed on this affine set.
Therefore, Eq. \eqref{eq4} is proved.

Finally, we have
\[
\Gamma_2(\mathcal C)=2h_2(\mathcal C)+2
=2\max_{i\ne j}(P_{ij}+1)
=2\max_{i\ne j}\min_{\substack{\langle \boldsymbol{u},\boldsymbol{h}_j\rangle=0\\ \langle \boldsymbol{u},\boldsymbol{h}_i\rangle=1}}\sum_{t=0}^{n-1} |\langle \boldsymbol{u},\boldsymbol{h}_t\rangle|.
\]
Therefore, Eq. \eqref{eq5} is proved.
\end{proof}

\section{Several Key Geometric Properties}\label{sec:geometry}

Before deriving the lower bound for $\Gamma_2(\mathcal{C})$, we first prove two important geometric results in this section. 
We first derive the normalization lemma from Theorem \ref{thm:symmetric-john} stated in Section~\ref{sec:prelim}.

\begin{lemma}\label{lem:john}
Let $K\subseteq\mathbb R^r$ be a centrally symmetric convex body. Then there exists an invertible linear map $A$ such that
\begin{equation}
B_2^r\subseteq AK\subseteq \sqrt r\,B_2^r.
\end{equation}
\end{lemma}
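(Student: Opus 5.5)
The plan is to derive the lemma directly from Theorem~\ref{thm:symmetric-john}. Let $E$ be the ellipsoid of maximum volume contained in $K$. By Theorem~\ref{thm:symmetric-john}, $E$ is centered at the origin and $E\subseteq K\subseteq \sqrt r\,E$. Since $E$ is a centered ellipsoid, it can be written as
\[
E=\{\boldsymbol{x}\in\mathbb R^r\mid \boldsymbol{x}^TQ^{-1}\boldsymbol{x}\le 1\}
\]
for some symmetric positive definite matrix $Q$. Nondegeneracy of $E$ follows because it has maximal volume inside a body with nonempty interior: $K$ contains a small Euclidean ball, so the maximal volume is positive.

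Next I take $A=Q^{-1/2}$, the inverse of the symmetric positive definite square root of $Q$; it is invertible. For $\boldsymbol{x}\in E$ put $\boldsymbol{y}=A\boldsymbol{x}$. Then $\|\boldsymbol{y}\|_2^2=\boldsymbol{x}^TQ^{-1}\boldsymbol{x}\le 1$, and conversely every $\boldsymbol{y}\in B_2^r$ arises this way from $\boldsymbol{x}=Q^{1/2}\boldsymbol{y}\in E$. Hence $AE=B_2^r$.

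Because a linear map preserves inclusions and commutes with scalar dilation, applying $A$ to $E\subseteq K\subseteq\sqrt r\,E$ gives
\[
B_2^r=AE\subseteq AK\subseteq A(\sqrt r\,E)=\sqrt r\,AE=\sqrt r\,B_2^r.
\]

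There is no real obstacle here. The only point needing care is that $E$ is a genuine, non-degenerate ellipsoid of the form in Table~\ref{tab:notation}, so that $Q$ is positive definite and $Q^{-1/2}$ exists. This is implicit in Theorem~\ref{thm:symmetric-john}.
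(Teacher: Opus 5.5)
Your proposal is correct and matches the paper's proof: invoke Theorem~\ref{thm:symmetric-john} to get $E\subseteq K\subseteq\sqrt r\,E$, then apply the invertible linear map sending $E$ to $B_2^r$. Your explicit choice $A=Q^{-1/2}$ makes concrete the map the paper only asserts exists.
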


\begin{proof}

By Theorem~\ref{thm:symmetric-john}, there is an origin-centered ellipsoid $E$ such that
\[
E\subseteq K\subseteq \sqrt r\,E.
\]
Since $E$ has nonempty interior, there is an invertible linear map $A$ that sends $E$ to $B_2^r$. Applying $A$ to the inclusion gives
\[
B_2^r\subseteq AK\subseteq \sqrt r\,B_2^r.
\]

\end{proof}

The significance of Lemma~\ref{lem:john} lies in that it allows us to normalize the parity matrix and the noise part of the syndrome vector for any $[n,k]$ linear code $\mathcal{C}$. 
Specifically, for parity matrix $H$ of the code $\mathcal{C}$, applying Lemma~\ref{lem:john} with $K=S_H$, there exists an invertible matrix $A \in \mathbb{R}^{r \times r}$ such that $\mathcal{C} =\operatorname{Ker}(H) = \operatorname{Ker}(AH)$ (denote $G = AH$), and we have
\begin{align}\label{eqAAA}
    B_2^r \subseteq S_G = A S_H \subseteq \sqrt{r}B_2^r.
\end{align}
Henceforth, without loss of generality, we take $G$ as the parity matrix for code $\mathcal{C}$.
Fig. \ref{fig:1} illustrates the geometric intuition behind the set transformation (shown in two-dimensional case).

\begin{figure}[htpb]
		\centering
		\includegraphics[width=0.9\linewidth]{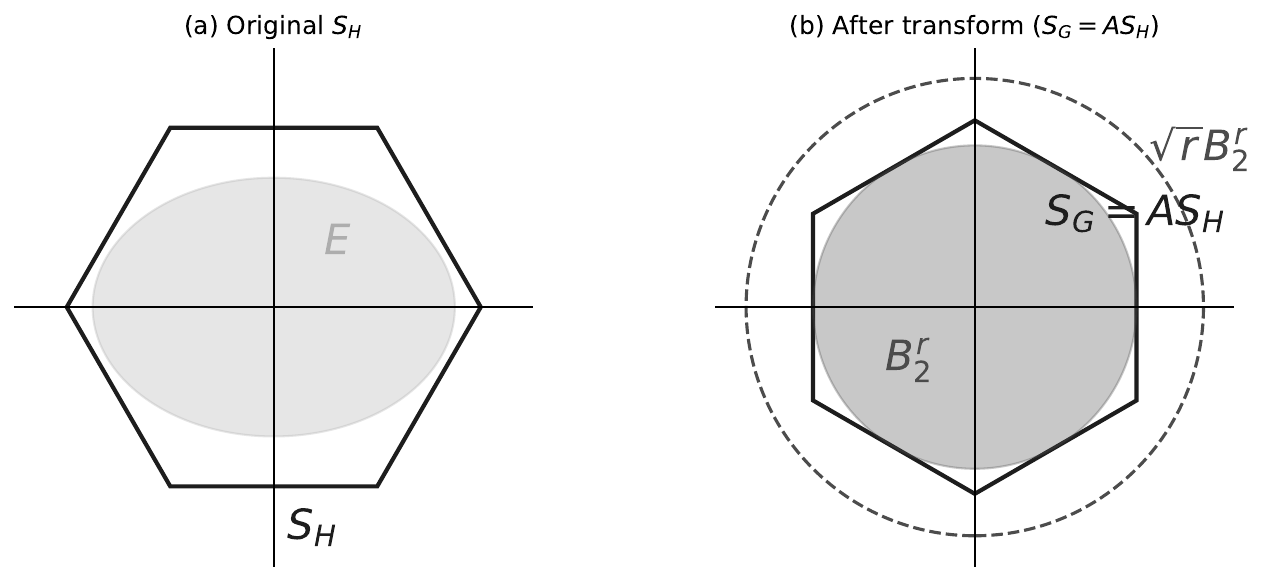}
		\caption{Illustration of the normalization of $S_H$ (in the two-dimensional case, i.e., $r=2$), where $E$ is the ellipsoid of maximal volume inscribed in $S_H$.}
		\label{fig:1}
	\end{figure}

Finally, we establish another auxiliary lemma in this section.
For a nonzero vector $\boldsymbol{v}$, write $[\boldsymbol{v}]$ for the line through the origin generated by $\boldsymbol{v}$. For unit vectors $\boldsymbol{v},\boldsymbol{w}\in\mathbb S^d$, define their projective distance by
\[
d_{\mathrm P}([\boldsymbol{v}],[\boldsymbol{w}])=\arccos|\langle \boldsymbol{v},\boldsymbol{w}\rangle|.
\]

\begin{lemma}\label{lem:packing}
Let $d$ be a positive integer and $[\boldsymbol{v}_1],\ldots,[\boldsymbol{v}_M]$ be $M \ge 2$ distinct lines through the origin in $\mathbb{R}^{d+1}$, each represented by a unit vector $\boldsymbol{v}_i \in \mathbb{S}^{d}$. Then there exist $i\ne j$ such that
\begin{align}\label{eq10}
d_{\mathrm P}([\boldsymbol{v}_i],[\boldsymbol{v}_j])\le \beta_d M^{-1/d}, 
\end{align}
where
\[
\beta_d=\left(\frac{d\pi^{d-1}|\mathbb S^d|}{|\mathbb S^{d-1}|}\right)^{1/d}.
\]
\end{lemma}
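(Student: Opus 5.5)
The plan is a volume (packing) argument on the sphere $\mathbb S^d$, applied to the $2M$ antipodal representatives $\pm\boldsymbol{v}_1,\ldots,\pm\boldsymbol{v}_M$. Let
\[
\theta=\min_{i\ne j} d_{\mathrm P}([\boldsymbol{v}_i],[\boldsymbol{v}_j]).
\]
Since the lines are distinct, $\theta>0$, and $\theta\le \pi/2$ by definition of $d_{\mathrm P}$. It suffices to show $\theta^d\le \beta_d^d/M$.

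\textbf{Step 1 (separation of the $2M$ points).} Write $\angle(\cdot,\cdot)$ for the geodesic angle on $\mathbb S^d$. I check that any two distinct points of $\{\pm\boldsymbol{v}_i\}$ are at geodesic distance at least $\theta$.
\begin{itemize}
\item For $i\ne j$, $\angle(\pm\boldsymbol{v}_i,\pm\boldsymbol{v}_j)\in\{\arccos\langle\boldsymbol{v}_i,\boldsymbol{v}_j\rangle,\ \pi-\arccos\langle\boldsymbol{v}_i,\boldsymbol{v}_j\rangle\}$. Both values are at least $\arccos|\langle\boldsymbol{v}_i,\boldsymbol{v}_j\rangle|\ge\theta$.
\item For the antipodal pair, $\angle(\boldsymbol{v}_i,-\boldsymbol{v}_i)=\pi\ge\theta$.
\end{itemize}
Hence the open spherical caps of angular radius $\theta/2$ centered at these $2M$ points are pairwise disjoint, by the triangle inequality for geodesic distance. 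Their total area is therefore at most $|\mathbb S^d|$.

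\textbf{Step 2 (lower bound on cap area).} A cap of angular radius $\rho$ in $\mathbb S^d$ has area $|\mathbb S^{d-1}|\int_0^{\rho}\sin^{d-1}t\,dt$. Here $\rho=\theta/2\le\pi/4\le\pi/2$, so Jordan's inequality $\sin t\ge 2t/\pi$ applies on $[0,\rho]$. This gives
\[
\mathrm{area}\ge |\mathbb S^{d-1}|\Big(\frac{2}{\pi}\Big)^{d-1}\frac{\rho^d}{d}
=|\mathbb S^{d-1}|\frac{2^{d-1}}{\pi^{d-1}}\cdot\frac{\theta^d}{d\,2^d}.
\]
For $d=1$ the bound is exact: the "cap" is an arc of length $\theta$ and $|\mathbb S^0|=2$.

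\textbf{Step 3 (conclude).} Combining Steps 1 and 2,
\[
2M\,|\mathbb S^{d-1}|\frac{\theta^d}{2d\,\pi^{d-1}}\le|\mathbb S^d|,
\]
that is,
\[
\theta^d\le \frac{d\pi^{d-1}|\mathbb S^d|}{M|\mathbb S^{d-1}|}=\beta_d^d M^{-1}.
\]
Taking $d$-th roots yields Eq.~\eqref{eq10} for the minimizing pair $(i,j)$.

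The only delicate point is Step 1: I must verify that passing from projective distance to geodesic distance between signed representatives loses nothing. This is exactly where taking the minimum over both signs, together with $\theta\le\pi/2$, is used. The rest is the routine cap-area computation.
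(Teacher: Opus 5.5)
Your proposal is correct and is essentially the paper's proof: it uses the same $2M$ caps of angular radius $\theta/2$ centered at $\pm\boldsymbol{v}_i$, the same lower bound on cap area via $\sin t\ge 2t/\pi$, and the same comparison of total area with $|\mathbb S^d|$. The one difference is that you explicitly verify the pairwise geodesic separation of the signed representatives in Step 1, which the paper only asserts when it says the caps have disjoint interiors.
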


\begin{proof}
Let $\delta=\min_{i\ne j} d_{\mathrm P}([\boldsymbol{v}_i],[\boldsymbol{v}_j])$. The $2M$ spherical caps on $\mathbb S^d$ centered at $\pm \boldsymbol{v}_1,\ldots,\pm \boldsymbol{v}_M$ and having angular radius $\frac{\delta}{2}$ have disjoint interiors. 
Here a spherical cap of angular radius $\rho$ centered at $\boldsymbol{s}\in\mathbb S^d$ is the set $\{\boldsymbol{x}\in\mathbb S^d\mid \arccos\langle\boldsymbol{x},\boldsymbol{s}\rangle\le \rho\}$, and $\operatorname{Cap}_d(\rho)$ denotes its surface area. This area does not depend on $\boldsymbol{s}$.
Since $0<\delta\le \frac{\pi}{2}$ and $\sin t\ge \frac{2t}{\pi}$ for $0\le t\le \frac{\pi}{2}$, the area of one such cap satisfies
\begin{align*}
   \operatorname{Cap}_d(\frac{\delta}{2})&=|\mathbb S^{d-1}|\int_0^{\delta/2}\sin^{d-1}t\,dt\\
   &\geq |\mathbb S^{d-1}|\int_0^{\delta/2}\left(\frac{2t}{\pi}\right)^{d-1}dt\\
   &=\frac{|\mathbb S^{d-1}|}{2d\pi^{d-1}}\delta^d.
\end{align*}

Comparing the total area of the $2M$ caps with $|\mathbb S^d|$ gives
\[
2M\cdot \frac{|\mathbb S^{d-1}|}{2d\pi^{d-1}}\delta^d\le |\mathbb S^d|.
\]
Solving for $\delta$ proves Eq. \eqref{eq10}.
\end{proof}

\section{Proof of the Matching Lower Bound}\label{sec:lower}

Based on the preparations in the previous sections, we derive a lower bound for $\Gamma_2(\mathcal{C})$ in this section.

\begin{theorem}\label{thm:main}
Let $2\le r<n$, and let $\mathcal C\subseteq\mathbb R^n$ be a real linear $[n,n-r]$ code. 
Then 
\begin{align}\label{eq12}
\Gamma_2(\mathcal C)\ge \frac{a_r}{\sqrt r\,\beta_{r-1}\,2^{\frac{1}{r-1}}}\, n^{1+\frac{1}{r-1}},
\end{align}
where
\begin{align*}
    a_r=\frac{\Gamma(\frac{r}{2})}{\sqrt{\pi}\,\Gamma(\frac{r+1}{2})}\ \text{and} \  \beta_{r-1}=\left(\frac{d\pi^{r-2}|\mathbb S^{r-1}|}{|\mathbb S^{r-2}|}\right)^{\frac{1}{r-1}}.
\end{align*}

\end{theorem}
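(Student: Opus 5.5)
We plan to combine the dual expression of Theorem~\ref{prop:dual} with the John normalization \eqref{eqAAA} and the packing bound of Lemma~\ref{lem:packing}. By Lemma~\ref{lem:degenerate} we may assume the columns are nonzero and pairwise non-proportional. We also replace $H$ by $G=AH$, so that $B_2^r\subseteq S_G\subseteq\sqrt r\,B_2^r$. The dual formula \eqref{eq5} is invariant under this change: $\boldsymbol u\mapsto A^{-T}\boldsymbol u$ maps the constraint sets bijectively and preserves $\langle\boldsymbol u,\boldsymbol h_t\rangle$. Write $G=(\boldsymbol g_0,\dots,\boldsymbol g_{n-1})$. By \eqref{eq1}, the normalization says $1\le\|G^T\boldsymbol u\|_1\le\sqrt r$ for every unit $\boldsymbol u$.

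\textbf{Step 1 (averaging).} We integrate $\|G^T\boldsymbol u\|_1=\sum_t|\langle\boldsymbol u,\boldsymbol g_t\rangle|$ over $\boldsymbol u$ uniform on $\mathbb S^{r-1}$. We use $\mathbb E|\langle\boldsymbol u,\boldsymbol g\rangle|=a_r\|\boldsymbol g\|_2$, where $a_r=\Gamma(\frac r2)/(\sqrt\pi\,\Gamma(\frac{r+1}2))$ is the mean of $|u_1|$. This gives $a_r\sum_t\|\boldsymbol g_t\|_2\le\sqrt r$. It follows that at least $\lceil n/2\rceil$ indices satisfy $\|\boldsymbol g_t\|_2\le \frac{2\sqrt r}{a_r n}$, since otherwise more than $n/2$ columns would each exceed this value and the sum would be too large.

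\textbf{Step 2 (packing).} Apply Lemma~\ref{lem:packing} with $d=r-1$ to the $M\ge n/2$ distinct lines spanned by these short columns. This yields two short columns $\boldsymbol g_i,\boldsymbol g_j$ whose lines make an angle $\theta\le\beta_{r-1}(n/2)^{-1/(r-1)}$, and we may assume $\theta>0$.

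\textbf{Step 3 (evaluating the pair).} Fix this $(i,j)$ and any feasible $\boldsymbol u$ for \eqref{eq4}, that is, $\langle\boldsymbol u,\boldsymbol g_j\rangle=0$ and $\langle\boldsymbol u,\boldsymbol g_i\rangle=1$. Decompose $\boldsymbol g_i=\alpha\hat{\boldsymbol g}_j+\boldsymbol w$ with $\boldsymbol w\perp\boldsymbol g_j$ and $\|\boldsymbol w\|_2=\|\boldsymbol g_i\|_2\sin\theta$. Then $1=\langle\boldsymbol u,\boldsymbol w\rangle\le\|\boldsymbol u\|_2\|\boldsymbol g_i\|_2\sin\theta$, so
\[
\|\boldsymbol u\|_2\ge\frac1{\|\boldsymbol g_i\|_2\sin\theta}\ge\frac{a_r n}{2\sqrt r\,\theta}.
\]
Since $B_2^r\subseteq S_G$, we have $\|G^T\boldsymbol u\|_1\ge\|\boldsymbol u\|_2$. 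Hence the inner minimum in \eqref{eq5} for this pair is at least $\frac{a_r n}{2\sqrt r\,\beta_{r-1}}(n/2)^{1/(r-1)}$. Multiplying by the factor $2$ in \eqref{eq5} and taking the maximum over pairs gives
\[
\Gamma_2(\mathcal C)\ge\frac{a_r}{\sqrt r\,\beta_{r-1}2^{1/(r-1)}}\,n^{1+\frac1{r-1}},
\]
which is exactly \eqref{eq12}.

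\textbf{Main obstacle.} The delicate point is Step 1. We need simultaneously that many columns are short (so that $\|\boldsymbol u\|_2$ is forced to be large) and that the John normalization converts $\|\boldsymbol u\|_2$ back into the $\ell_1$ objective. The upper inclusion $S_G\subseteq\sqrt rB_2^r$ supplies the first, and the lower inclusion $B_2^r\subseteq S_G$ supplies the second. The halving trick costs only the factor $2^{1/(r-1)}$. We also have to check that the spherical average identity and the constant $a_r$ are computed correctly.
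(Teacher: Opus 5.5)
Your proposal is correct and follows the paper's proof essentially step for step. The shared chain is: John normalization $B_2^r\subseteq S_G\subseteq\sqrt r\,B_2^r$, then spherical averaging to get $\sum_t\|\boldsymbol g_t\|_2\le\sqrt r/a_r$, then at least $\lceil n/2\rceil$ short columns, then the packing lemma with $d=r-1$, and finally the Cauchy--Schwarz bound $\|\boldsymbol u\|_2\ge 1/(\|\boldsymbol g_i\|_2\sin\theta)$ converted back to the $\ell_1$ objective via the lower inclusion. Your explicit check that the dual formula is invariant under $\boldsymbol u\mapsto A^{-T}\boldsymbol u$ is a harmless addition; the paper simply notes that $G=AH$ is another full-rank parity matrix of the same code.
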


\begin{proof}
Let $G=AH=(\boldsymbol{g}_0,\ldots,\boldsymbol{g}_{n-1})\in\mathbb R^{r\times n}$ be a full row rank parity matrix of $\mathcal C$. We may assume that (see Eq. \eqref{eqAAA})
\begin{align*}
B_2^r\subseteq S_G\subseteq \sqrt r\,B_2^r. 
\end{align*}

By Eq. \eqref{eqAAA}, for every unit vector $\boldsymbol{u}\in\mathbb S^{r-1}$,
\begin{align}\label{eqq1}
h_{S_G}(\boldsymbol{u})=\sum_{t=0}^{n-1} |\langle \boldsymbol{u},\boldsymbol{g}_t\rangle|\leq h_{\sqrt{r}B_2^r}(\boldsymbol{u})=\sqrt r.
\end{align}

Let $\boldsymbol{w}=(w_0,w_1,\ldots,w_{r-1})$ be a random unit vector uniformly distributed on $\mathbb S^{r-1}$, and $\mathbb E$ denotes expectation over the random choice of $\boldsymbol{w}$. The uniform distribution on the sphere is invariant under rotations.
For any fixed nonzero $\boldsymbol{g}\in\mathbb R^r$, we have
\begin{align}\label{eqq2}
    \mathbb E[|\langle \boldsymbol{w},\boldsymbol{g}\rangle|]=\|\boldsymbol{g}\|_2\cdot \mathbb E[|\langle \boldsymbol{w},\frac{\boldsymbol{g}}{\|\boldsymbol{g}\|_2}\rangle|]. 
\end{align}
Note that, by symmetry, since $\frac{\boldsymbol{g}}{\|\boldsymbol{g}\|_2} \in \mathbb S^{r-1}$ and $\boldsymbol{w}$ is uniformly distributed on $\mathbb{S}^{r-1}$, we may replace $\frac{\boldsymbol{g}}{\|\boldsymbol{g}\|_2}$ by the unit vector $\boldsymbol{e}_0:=(1,0,\ldots,0)\in \mathbb S^{r-1}$ without changing the expectation, i.e., 
\begin{align}
    \mathbb E[|\langle \boldsymbol{w},\boldsymbol{g}\rangle|]=\|\boldsymbol{g}\|_2\cdot \mathbb E[|\langle \boldsymbol{w},\boldsymbol{e}_0\rangle|]=\|\boldsymbol{g}\|_2\cdot \mathbb E[|w_0|]:=\|\boldsymbol{g}\|_2\cdot a_r,
\end{align}
where (note that the marginal density of a single coordinate of a uniform distribution on \( \mathbb{S}^{r-1} \) is known to be $f_{w_0}(t) = \frac{\Gamma(\frac{r}{2})}{\sqrt{\pi}\,\Gamma(\frac{r-1}{2})} (1 - t^2)^{\frac{r-3}{2}}$ with $|t|\leq 1$)
\begin{align*}
a_r=\mathbb E|w_0|=2 \cdot \frac{\Gamma(\frac{r}{2})}{\sqrt{\pi}\,\Gamma(\frac{r-1}{2})} \int_0^1 t (1 - t^2)^{\frac{r-3}{2}} \, dt=\frac{\Gamma(\frac{r}{2})}{\sqrt{\pi}\,\Gamma(\frac{r+1}{2})}.
\end{align*}

Taking the expectation of Eq. \eqref{eqq1} gives
\begin{align*}
\mathbb E[h_{S_G}(\boldsymbol{w})]=\sum_{t=0}^{n-1}\|\boldsymbol{g}_t\|_2\cdot a_r\leq \sqrt{r}.
\end{align*}
Then we have 
\begin{align}\label{eqq3}
    \sum_{t=0}^{n-1}\|\boldsymbol{g}_t\|_2\leq L_r:=\frac{\sqrt{r}}{a_r}.
\end{align}
Define the set of short-column indices by
\begin{align}\label{eqq4}
J:=\left\{k\in[0,n-1]\mid \|\boldsymbol{g}_k\|_2\le \frac{2L_r}{n}\right\}.
\end{align}
Then $|J|\ge \lceil n/2\rceil$.
Otherwise, the sum in Eq. \eqref{eqq3} would exceed $L_r$.

Because $2\le r<n$, we have $n\ge 3$, so $|J|\ge 2$. 
By Lemma~\ref{lem:packing}, there exist distinct $i,j\in J$ such that the projective angle between $\boldsymbol{g}_i$ and $\boldsymbol{g}_j$,
\[
\theta_{ij}=\arccos\frac{|\langle \boldsymbol{g}_i,\boldsymbol{g}_j\rangle|}{\|\boldsymbol{g}_i\|_2\|\boldsymbol{g}_j\|_2}
\]
satisfies
\begin{align}\label{eqq5}
\theta_{ij}\le \beta_{r-1}\left(\frac{2}{n}\right)^{\frac{1}{r-1}}. 
\end{align}

We now return to the dual expression for $\Gamma_2(\mathcal{C})$ (see Theorem~\ref{prop:dual}).
Consider any vector $\boldsymbol{u}$ feasible for the ordered pair $(i,j)$ in Eq. \eqref{eq4}:
\[
\langle \boldsymbol{u},\boldsymbol{g}_j\rangle=0,\ \langle \boldsymbol{u},\boldsymbol{g}_i\rangle=1.
\]
Since $\langle \boldsymbol{u}, \boldsymbol{g}_j \rangle = 0$, the component of $\boldsymbol{g}_i$ parallel to $\boldsymbol{g}_j$ does not contribute to $\langle \boldsymbol{u}, \boldsymbol{g}_i \rangle$. 
Let $\boldsymbol{v} = \boldsymbol{g}_i - \frac{\langle \boldsymbol{g}_i, \boldsymbol{g}_j \rangle}{\|\boldsymbol{g}_j\|_2^2} \boldsymbol{g}_j$ be the orthogonal projection of $\boldsymbol{g}_i$ onto $\boldsymbol{g}_j^\perp$. 
Then $\langle \boldsymbol{u}, \boldsymbol{g}_i \rangle = \langle \boldsymbol{u}, \boldsymbol{v} \rangle$, and $\|\boldsymbol{v}\|_2 = \|\boldsymbol{g}_i\|_2 \sin (\theta_{ij})$. 
Hence,
\begin{align*}
    1 = |\langle \boldsymbol{u}, \boldsymbol{v} \rangle| \le \|\boldsymbol{u}\|_2 \|\boldsymbol{v}\|_2 = \|\boldsymbol{u}\|_2 \|\boldsymbol{g}_i\|_2 \sin (\theta_{ij}),
\end{align*}
which gives 
\begin{align}\label{eqq6}
    \|\boldsymbol{u}\|_2 \ge \frac{1}{\|\boldsymbol{g}_i\|_2 \sin (\theta_{ij})}.
\end{align}

The lower inclusion in Eq. \eqref{eqAAA} gives
\[
h_{S_G}(\boldsymbol{u})\ge h_{B_2^r}(\boldsymbol{u})=\|\boldsymbol{u}\|_2.
\]
Using Theorem~\ref{prop:dual}, Eqs.~\eqref{eqq4}, \eqref{eqq5}, and \eqref{eqq6}, and $\sin(\theta)\le \theta$ for all $\theta\geq 0$, we obtain
\begin{align*}
    \Gamma_2(\mathcal C)
&\ge 2\min_{\substack{\langle \boldsymbol{u},\boldsymbol{g}_j\rangle=0\\ \langle \boldsymbol{u},\boldsymbol{g}_i\rangle=1}} h_{S_G}(\boldsymbol{u})\\
&\ge \frac{2}{\|\boldsymbol{g}_i\|_2\sin\theta_{ij}}\\
&\ge \frac{2}{(\frac{2L_r}{n})\cdot \beta_{r-1}(\frac{2}{n})^{\frac{1}{r-1}}}
\\
&=\frac{a_r}{\sqrt r\,\beta_{r-1}\,2^{\frac{1}{r-1}}}\, n^{1+\frac{1}{r-1}}.
\end{align*}
Therefore, the Theorem is proved.
\end{proof}

Next, we derive a simpler, uniform lower bound.

\begin{lemma}\label{lem:constants}
For every integer $r\ge 2$ and $d\ge 1$,
\[
a_r\ge \frac{1}{\sqrt{3r}},\  \beta_d\le 2\pi.
\]
\end{lemma}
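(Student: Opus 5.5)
We treat the two inequalities separately.

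\emph{Bound on $a_r$.} Write $a_r=\mathbb E|w_0|$ for $\boldsymbol{w}$ uniform on $\mathbb S^{r-1}$. The cleanest route is a moment (Hölder/Paley-type) comparison. By symmetry $\mathbb E[w_0^2]=\frac1r$, and the standard fourth moment is $\mathbb E[w_0^4]=\frac{3}{r(r+2)}$. Since $w_0^2=|w_0|^{2/3}|w_0|^{4/3}$, Hölder with exponents $3/2$ and $3$ gives
\[
\mathbb E[w_0^2]\le (\mathbb E|w_0|)^{2/3}(\mathbb E w_0^4)^{1/3},
\]
hence
\[
a_r\ge \frac{(\mathbb E w_0^2)^{3/2}}{(\mathbb E w_0^4)^{1/2}}=\frac{r^{-3/2}}{\sqrt{3/(r(r+2))}}=\frac{1}{\sqrt3\,r}\sqrt{r+2}\ \ge\ \frac{1}{\sqrt{3r}}.
\]
The fourth moment follows from the Gaussian representation $w_0=g_0/\|\boldsymbol g\|_2$ with independence of direction and radius: $3=\mathbb E g_0^4=\mathbb E\|\boldsymbol g\|^4\,\mathbb E w_0^4=r(r+2)\,\mathbb E w_0^4$. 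Alternatively one could manipulate the Gamma-ratio directly with Gautschi's inequality, but the moment argument avoids any Gamma-function asymptotics. I expect this to be routine.

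\emph{Bound on $\beta_d$.} We need $\frac{d\pi^{d-1}|\mathbb S^d|}{|\mathbb S^{d-1}|}\le (2\pi)^d$, i.e.
\[
\frac{|\mathbb S^d|}{|\mathbb S^{d-1}|}\le \frac{2^d\pi}{d}.
\]
Use $|\mathbb S^d|/|\mathbb S^{d-1}|=\int_0^\pi\sin^{d}t\,dt=\sqrt\pi\,\Gamma(\frac{d+1}2)/\Gamma(\frac d2+1)$. This ratio is at most $\pi$, since the integrand is at most $1$, and in fact it is at most $\sqrt{2\pi/d}$ by log-convexity or Gautschi. Then it suffices that $\sqrt{2\pi/d}\le 2^d\pi/d$, i.e. $\sqrt{2d/\pi}\le 2^d$, which holds for all $d\ge1$. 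Even the crude bound $\pi\le 2^d\pi/d$ suffices, since $d\le 2^d$ always, so no Gamma estimates are needed: $\int_0^\pi\sin^d t\,dt\le\pi$ gives $\beta_d^d\le d\pi^d\le 2^d\pi^d$.

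The only mildly delicate point is the fourth-moment identity for $a_r$; everything else reduces to elementary inequalities. As a sanity check, $r=2$ gives $a_2=2/\pi\approx0.637\ge 1/\sqrt6\approx0.408$.
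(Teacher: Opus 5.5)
Your argument is essentially the paper's. For $a_r$ you use the same H\"older comparison $\mathbb E[X^2]\le(\mathbb E X)^{2/3}(\mathbb E X^4)^{1/3}$ with $\mathbb E[X^4]=\frac{3}{r(r+2)}$. For $\beta_d$ you use the same crude estimate: the area ratio is at most $\pi$, then $d\le 2^d$. The only variation is how you get the fourth moment. You take it from the Gaussian representation (radius independent of direction, $\mathbb E\|\boldsymbol g\|_2^4=r(r+2)$). The paper derives it from a $45^\circ$ rotation giving $\mathbb E[u_0^4]=3\,\mathbb E[u_0^2u_1^2]$, together with $(\sum_i u_i^2)^2=1$. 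That route is longer but uses only rotation and sign invariance.

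One slip to fix: the identity you quote is off by one. The correct formula is $|\mathbb S^d|/|\mathbb S^{d-1}|=\int_0^\pi\sin^{d-1}t\,dt=\sqrt\pi\,\Gamma(\frac d2)/\Gamma(\frac{d+1}2)$; for $d=1$ this is $2\pi/2=\pi$, not $2$. Consequently your aside that the ratio is at most $\sqrt{2\pi/d}$ is false. By the Wallis relation $W_{d-1}W_d=\frac{\pi}{2d}$ with $W_d<W_{d-1}$, one gets $\int_0^\pi\sin^{d-1}t\,dt>\sqrt{2\pi/d}$ for every $d\ge1$. The proof still stands, because the crude bound you finally rely on needs only $\sin^{d-1}t\le 1$, which holds for either exponent. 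You should drop the $\sqrt{2\pi/d}$ remark, or replace it by $\sqrt{2\pi/(d-1)}$ for $d\ge 2$.
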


\begin{proof}
Let $X=|u_0|$ for $\boldsymbol{u}=(u_0,u_1,\ldots,u_{r-1})$ uniform on $\mathbb S^{r-1}$. By symmetry, 
\begin{align*}
    \mathbb E[X^2] =\frac{1}{r}\cdot(\sum_{i=0}^{r-1} \mathbb E[u_i^2])=\frac{1}{r}\cdot(\mathbb E[\sum_{i=0}^{r-1} u_i^2])=\frac{1}{r}.
\end{align*}
Let
\[
A=\mathbb E[u_0^4] ,\ B=\mathbb E[u_0^2 u_1^2].
\]
Actually, by symmetry of the spherical distribution, \( A = \mathbb{E}[u_i^4] \) for all \( i \in[0,r-1]\), and \( B = \mathbb{E}[u_i^2 u_j^2] \) for all \( i,j\in[0,r-1]\) with \(i \neq j \).
Consider the orthogonal transformation \( R\in\mathbb R^{r\times r} \) that rotates the first two coordinates by \( 45^\circ \):
\[
R = \begin{pmatrix}
\frac{1}{\sqrt{2}} & \frac{1}{\sqrt{2}} & 0 & \cdots & 0 \\
-\frac{1}{\sqrt{2}} & \frac{1}{\sqrt{2}} & 0 & \cdots & 0 \\
0 & 0 & 1 & \cdots & 0 \\
\vdots & \vdots & \vdots & \ddots & \vdots \\
0 & 0 & 0 & \cdots & 1
\end{pmatrix}.
\]
Let $\boldsymbol{v}=R\boldsymbol{u}=(v_0,v_1,\ldots,v_{r-1})$.
Since the uniform distribution on \( \mathbb{S}^{r-1} \) is rotationally invariant, \( \boldsymbol{u} \) and \( \boldsymbol{v} \) have the same distribution.
In particular, the first coordinates satisfy
\[
\mathbb{E}[u_0^4] = \mathbb{E}[v_0^4].
\]
Note that
$v_0 = \frac{u_0 + u_1}{\sqrt{2}}.$
Thus
\[
A = \mathbb{E}\left[ \left( \frac{u_0 + u_1}{\sqrt{2}} \right)^4 \right]
= \frac{1}{4} \mathbb{E}\left[ (u_0 + u_1)^4 \right].
\]
Furthermore, we have
\[
A = \frac{1}{4} \mathbb{E}\left[ u_0^4 + 4u_0^3 u_1 + 6u_0^2 u_1^2 + 4u_0 u_1^3 + u_1^4 \right].
\]
Because the distribution is symmetric under the sign change \( u_i \mapsto -u_i \), all moments containing an odd power of \( u_0 \) or \( u_1 \) vanish. 
Hence
\[
\mathbb{E}[u_0^3 u_1] = \mathbb{E}[u_0 u_1^3] = 0.
\]
Therefore
\[
A = \frac{1}{4} \left( \mathbb{E}[u_0^4] + 6\mathbb{E}[u_0^2 u_1^2] + \mathbb{E}[u_1^4] \right).
\]
Using symmetry to replace \( \mathbb{E}[u_0^4] = \mathbb{E}[u_1^4] = A \) and \( \mathbb{E}[u_0^2 u_1^2] = B \), we get
\[
A = \frac{1}{4} (A + 6B + A).
\]
Then we have $A=3B$.

On the other hand, since 
\(\left( \sum_{i=0}^{r-1} u_i^2 \right)^2 = 1.\)
Expanding the square gives
\[
\sum_{i=0}^{r-1} u_i^4 + \sum_{0\le i \neq j\le r-1} u_i^2 u_j^2 = 1.
\]
Taking expectations on both sides, we have
\[
\mathbb{E}\left[ \sum_{i=0}^{r-1} u_i^4 \right] + \mathbb{E}\left[ \sum_{0\le i \neq j\le r-1} u_i^2 u_j^2 \right] = 1.
\]
By symmetry,
\[
\mathbb{E}\left[ \sum_{i=0}^{r-1} u_i^4 \right] = rA,  \]
and 
\[ \mathbb{E}\left[ \sum_{0\le i \neq j\le r-1} u_i^2 u_j^2 \right] = r(r-1)B.\] 
Thus
\[
rA + r(r-1)B = 1.
\]
Combining with $A=3B$, we have
\(A=\frac{3}{r(r+2)}.
\)
According to H\"older's inequality, we have
\[
\mathbb E [X^2]=\mathbb E[X^{2/3}X^{4/3}]\le (\mathbb E [X])^{2/3}(\mathbb E [X^4])^{1/3},
\]
which yields
\[
a_r=\mathbb E [X]\ge \frac{(\mathbb E [X^2])^{3/2}}{(\mathbb E [X^4])^{1/2}}
=\frac{\sqrt{r+2}}{\sqrt{3}\,r}\ge \frac{1}{\sqrt{3r}}.
\]

For the bound of $\beta_{d}$, the spherical-coordinate formula gives
\[
\frac{|\mathbb S^d|}{|\mathbb S^{d-1}|}=\int_0^\pi \sin^{d-1}t\,dt\le \pi.
\]
Hence
\[
\beta_d\le \pi d^{1/d}\le 2\pi,
\]
where the last inequality follows from $d\le 2^d$.
\end{proof}

\begin{corollary}\label{cor:v3}
For every real $[n,n-r]$ linear code $\mathcal{C}$ with $2\le r<n$,
\begin{align}\label{eqqls}
\Gamma_2(\mathcal C)\ge \frac{1}{4\pi\sqrt{3}r}\, n^{1+\frac{1}{r-1}}.
\end{align}
\end{corollary}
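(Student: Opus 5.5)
The plan is to obtain the corollary directly from Theorem~\ref{thm:main} by inserting the uniform constant estimates of Lemma~\ref{lem:constants}. No new geometric input is needed: the whole argument consists of bounding the prefactor
\[
\frac{a_r}{\sqrt r\,\beta_{r-1}\,2^{\frac{1}{r-1}}}
\]
from below by a quantity depending only on $r$ in the simple form $\frac{1}{4\pi\sqrt3\, r}$.

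The steps are as follows.

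\textbf{Degenerate case.} If some column of a parity matrix is zero, or two columns are proportional, then Lemma~\ref{lem:degenerate} gives $\Gamma_2(\mathcal C)=+\infty$, and Eq.~\eqref{eqqls} holds trivially. Otherwise Theorem~\ref{thm:main} applies for $2\le r<n$.

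\textbf{Numerator.} Lemma~\ref{lem:constants} gives $a_r\ge 1/\sqrt{3r}$.

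\textbf{Denominator.}
\begin{itemize}
\item Apply Lemma~\ref{lem:constants} with $d=r-1\ge 1$ to get $\beta_{r-1}\le 2\pi$.
\item Since $r-1\ge 1$, we have $2^{\frac{1}{r-1}}\le 2$.
\item The remaining factor is $\sqrt r$.
\end{itemize}
Hence
\[
\sqrt r\,\beta_{r-1}\,2^{\frac{1}{r-1}}\le 4\pi\sqrt r .
\]

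\textbf{Combination.} Putting the two estimates together,
\[
\frac{a_r}{\sqrt r\,\beta_{r-1}\,2^{\frac{1}{r-1}}}\ \ge\ \frac{1}{\sqrt{3r}\cdot 4\pi\sqrt r}=\frac{1}{4\pi\sqrt3\,r}.
\]
Multiplying by $n^{1+\frac{1}{r-1}}>0$ in Eq.~\eqref{eq12} then yields Eq.~\eqref{eqqls}.

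There is no real obstacle here. The only point requiring care is bookkeeping: the index shift $d=r-1$ when invoking the bound on $\beta_d$, and checking that every replacement moves the inequality in the right direction (lower bound in the numerator, upper bounds in the denominator). It is also worth noting that the abstract's and introduction's variants ($\sqrt{3}\,r$ versus $\sqrt{3r}$) differ; the computation above gives exactly the form $\frac{1}{4\pi\sqrt3\,r}$ stated in the corollary.
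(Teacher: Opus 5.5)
Your proposal is correct and is essentially the paper's proof: the paper also just inserts $a_r\ge 1/\sqrt{3r}$ and $\beta_{r-1}\le 2\pi$ from Lemma~\ref{lem:constants}, together with $2^{\frac{1}{r-1}}\le 2$, into the prefactor of Theorem~\ref{thm:main}. Your explicit handling of the degenerate case via Lemma~\ref{lem:degenerate}, which the paper covers through its standing non-degeneracy assumption, is accurate, as is your remark that the computation gives $\sqrt{3}\,r$ rather than the introduction's $\sqrt{3r}$.
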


\begin{proof}
Combine Lemma~\ref{lem:constants} with $2^{\frac{1}{r-1}}\le 2$, we have
\[
\frac{a_r}{\sqrt r\,\beta_{r-1}\,2^{\frac{1}{r-1}}}
\ge \frac{1}{4\pi\sqrt{3}r}.
\]
Thus we have Eq. \eqref{eqqls}.
\end{proof}

\textbf{Remark.}
Combining the construction in~\cite{AECC202606} with Corollary~\ref{cor:v3}, it follows that for any fixed redundancy $r\geq 3$, the leading exponent $n^{1+\frac{1}{r-1}}$ in the lower bound on $\Gamma_2(\mathcal C)$ is optimal. 
This provides a criterion for assessing whether the leading term of the error-correction capability of existing explicit single-error-correcting Analog ECC constructions is optimal (at least up to a constant factor). The corresponding results are summarized in Table~\ref{tab:summary}.
For the case of $r=2$, the $C_0$ code in~\cite{AECC2020} has been proven to have optimal (and is tight) threshold, so we do not list it in the table.

\begin{table}[htbp]
\centering
\caption{Results on the Optimality of the Leading Term of the Single-Error Correction Threshold for Existing Analog ECC Constructions.}
\label{tab:summary}
\small
\renewcommand{\arraystretch}{1.25}
\begin{tabularx}{\textwidth}{@{}>{\centering\arraybackslash}m{2.8cm}|>{\centering\arraybackslash}m{3.8cm}|>{\centering\arraybackslash}X|
>{\centering\arraybackslash}m{2.0cm}@{}}
\hline
\hline
Construction &Redundancy $r$ &The single-error correction threshold $\Gamma_2(\mathcal{C})$ & Whether the leading term is optimal \\
\hline
Construction in \cite{2025Analog}& 2 & $O(n^2)$& Yes \\
\hline
Construction in \cite{song2026AECC}&3 & $\frac{2n}{\sin(\frac{\pi}{2\lceil \sqrt{\frac{n-1}{2}}\rceil})}\ \sim\ O(n\sqrt{n})$& Yes \\
\hline
Proposition 6 in \cite{AECC2020} &$r^2-r\geq n\geq r$ (here we take $r=O(\sqrt{n})$)& $2\ \lceil \frac{2n}{r}\rceil\sim\ O(\sqrt{n})$& Yes \\
\hline
Proposition 5 in \cite{9965851} &$\Theta(\log(n))$& $ O(\frac{n}{\sqrt{\log(n)}})$& No \\
\hline
Corollary 16 in \cite{AECC20242} &$(\ell +1)q\sim \Theta(n^{\frac{1}{\ell +1}})$ ($n=q^{\ell+1}$ and $q$ is a prime power)& $\frac{2(\ell +1)n}{r}\sim O(n^{\frac{\ell}{\ell +1}})$& Yes \\
\hline
Construction $\mathcal{C}_1$ in \cite{AECC202606} &Fixed $r\geq 2$& $4n\left\lceil 
  \frac{n^{\frac{1}{r-1}} \Gamma\!\left(\frac{r+1}{2}\right)^{\frac{1}{r-1}}}{\sqrt{\pi}} 
  + \sqrt{\frac{r-1}{2}} 
\right\rceil\sim O(n^{1+\frac{1}{r-1}})$& Yes \\
\hline
Construction in \cite{zhu2026new} &$n-2$& $2\frac{\cos(\frac{\pi}{2n})}{\cos(\frac{3\pi}{2n})}+2\sim O(1)$& Yes \\
\hline
\hline
\end{tabularx}
\end{table}

\section{Conclusion}
\label{sec:con}

In this paper, we have established the first asymptotic lower bound for the single-error correction threshold $\Gamma_2(\mathcal{C})$ of analog error-correcting codes. For every real linear $[n, k=n-r]$ code $\mathcal{C}$ with $2 \le r < n$, we prove that $\Gamma_2(\mathcal{C}) \ge \frac{1}{4\pi\sqrt{3}r} \cdot n^{1+\frac{1}{r-1}}$, which, together with the matching upper bound from \cite{AECC202606}, confirms the optimality of the exponent $n^{1+\frac{1}{r-1}}$ for every fixed $r \ge 2$. 
One of our future directions is to establish optimality bounds for the error-correction threshold $\Gamma_m(\mathcal{C})$ in the multi-error scenario with $m \ge 3$.

\ifCLASSOPTIONcaptionsoff
  \newpage
\fi

\bibliographystyle{IEEEtran}
\bibliography{CNC-v1}

@String { ISIT         = {Proc. {IEEE} Int. Symp. Inf. Theory} }

@misc{song2026AECC,
      title={Analog Error Correcting Codes with Constant Redundancy}, 
      author={Wentu Song and Kui Cai},
      year={2026},
      eprint={2603.07117},
      archivePrefix={arXiv},
      primaryClass={cs.IT},
      url={https://arxiv.org/abs/2603.07117}, 
}

@misc{AECC202605,
      title={Tight Lower Bounds on The Single-Error Detection Threshold for Analog Error-Correcting Codes}, 
      author={Zhengyi Jiang and Wenhao Liu and Zhongyi Huang and Bo Bai and Gong Zhang and Hanxu Hou},
      year={2026},
      eprint={2605.08973},
      archivePrefix={arXiv},
      primaryClass={cs.IT},
      url={https://arxiv.org/abs/2605.08973}, 
}

@misc{AECC202606,
      title={Sharp Bounds and New Constructions for Single-Error Detection and Correction in Analog Codes}, 
      author={Hengzhuo Li and Zhengjie Jian and Xin Wang and Hengjia Wei},
      year={2026},
      eprint={2606.03011},
      archivePrefix={arXiv},
      primaryClass={cs.IT},
      url={https://arxiv.org/abs/2606.03011}, 
}

@misc{roth2026height,
      title={On the Height Profile of Analog Error-Correcting Codes}, 
      author={Ron M. Roth and Ziyuan Zhu and Changcheng Yuan and Paul H. Siegel and Anxiao Jiang},
      year={2026},
      eprint={2602.20366},
      archivePrefix={arXiv},
      primaryClass={cs.IT},
      url={https://arxiv.org/abs/2602.20366}, 
}

@INPROCEEDINGS{9965851,
  author={Roth, Ron M.},
  booktitle={2022 IEEE Information Theory Workshop (ITW)}, 
  title={Fault-Tolerant Neuromorphic Computing on Nanoscale Crossbar Architectures}, 
  year={2022},
  volume={},
  number={},
  pages={202-207},
  doi={10.1109/ITW54588.2022.9965851}}

@book{boyd2004convex,
  title={Convex Optimization},
  author={Boyd, Stephen and Vandenberghe, Lieven},
  year={2004},
  publisher={Cambridge University Press}
}

@INPROCEEDINGS{2025Analog,
  author={Jiang, Zhengyi and Shi, Hao and Huang, Zhongyi and Bai, Bo and Zhang, Gong and Hou, Hanxu},
  booktitle={2025 IEEE International Symposium on Information Theory (ISIT)}, 
  title={Constructions of Analog Error-Correcting Codes for Single-Error Detection and Correction with Efficient Decoding Algorithm}, 
  year={2025},
  volume={},
  number={},
  pages={1-6},
  doi={10.1109/ISIT63088.2025.11195392}}

@article{zhu2026new,
  title={A New Class of Geometric Analog Error Correction Codes for Crossbar Based In-Memory Computing},
  author={Zhu, Ziyuan and Yuan, Changcheng and Roth, Ron M and Siegel, Paul H and Jiang, Anxiao},
  journal={arXiv preprint arXiv:2603.03723},
  year={2026}
}

@article{zhang2023edge,
  title={Edge learning using a fully integrated neuro-inspired memristor chip},
  author={Zhang, Wenbin and Yao, Peng and Gao, Bin and Liu, Qi and Wu, Dong and Zhang, Qingtian and Li, Yuankun and Qin, Qi and Li, Jiaming and Zhu, Zhenhua and others},
  journal={Science},
  volume={381},
  number={6663},
  pages={1205--1211},
  year={2023},
  publisher={American Association for the Advancement of Science}
}

@article{sebastian2020memory,
  title={Memory devices and applications for in-memory computing},
  author={Sebastian, Abu and Le Gallo, Manuel and Khaddam-Aljameh, Riduan and Eleftheriou, Evangelos},
  journal={Nature nanotechnology},
  volume={15},
  number={7},
  pages={529--544},
  year={2020},
  publisher={Nature Publishing Group UK London}
}

@ARTICLE{50305,
  author={Kub, F.J. and Moon, K.K. and Mack, I.A. and Long, F.M.},
  journal={IEEE Journal of Solid-State Circuits}, 
  title={Programmable analog vector-matrix multipliers}, 
  year={1990},
  volume={25},
  number={1},
  pages={207-214},
  doi={10.1109/4.50305}}

@inproceedings{hu2016dot,
  title={Dot-product engine for neuromorphic computing: Programming 1T1M crossbar to accelerate matrix-vector multiplication},
  author={Hu, Miao and Strachan, John Paul and Li, Zhiyong and Grafals, Emmanuelle M and Davila, Noraica and Graves, Catherine and Lam, Sity and Ge, Ning and Yang, Jianhua Joshua and Williams, R Stanley},
  booktitle={Proceedings of the 53rd annual design automation conference},
  pages={1--6},
  year={2016}
}

@ARTICLE{104196,
  author={Boser, B.E. and Sackinger, E. and Bromley, J. and Le Cun, Y. and Jackel, L.D.},
  journal={IEEE Journal of Solid-State Circuits}, 
  title={An analog neural network processor with programmable topology}, 
  year={1991},
  volume={26},
  number={12},
  pages={2017-2025},
  doi={10.1109/4.104196}}

@article{AECC20242,
author={Jiang, Anxiao},
  journal={IEEE Transactions on Information Theory}, 
  title={Analog Error-Correcting Codes: Designs and Analysis}, 
  year={2024},
  volume={70},
  number={11},
  pages={7740-7756},
  doi={10.1109/TIT.2024.3454059}}

@ARTICLE{AECC2024,
  author={Wei, Hengjia and Roth, Ron M.},
  journal={IEEE Transactions on Information Theory}, 
  title={Multiple-Error-Correcting Codes for Analog Computing on Resistive Crossbars}, 
  year={2024},
  volume={70},
  number={12},
  pages={8647-8658},
  doi={10.1109/TIT.2024.3439674}}

@ARTICLE{AECC2020,
  author={Roth, Ron M.},
  journal={IEEE Transactions on Information Theory}, 
  title={Analog Error-Correcting Codes}, 
  year={2020},
  volume={66},
  number={7},
  pages={4075-4088},
  doi={10.1109/TIT.2020.2977918}}

@INPROCEEDINGS{AECC2019,
  author={Roth, Ron M.},
  booktitle={2019 IEEE International Symposium on Information Theory (ISIT)}, 
  title={Analog Error-Correcting Codes}, 
  year={2019},
  volume={},
  number={},
  pages={2419-2423},
  doi={10.1109/ISIT.2019.8849843}}

@book{Goodfellow-et-al-2016,
    title={Deep Learning},
    author={Ian Goodfellow and Yoshua Bengio and Aaron Courville},
    publisher={MIT Press},
    note={\url{http://www.deeplearningbook.org}},
    year={2016}
}

@InCollection{ball1997,
  author    = {Keith M. Ball},
  title     = {An Elementary Introduction to Modern Convex Geometry},
  booktitle = {Flavors of Geometry},
  editor    = {Silvio Levy},
  series    = {Mathematical Sciences Research Institute Publications},
  volume    = {31},
  pages     = {1--58},
  publisher = {Cambridge University Press},
  address   = {Cambridge},
  year      = {1997}
}

\end{document}